\newcommand{\comment}[1]{}
\lstdefinelanguage{Scala}%
{morekeywords={abstract,case,catch,char,class,%
    def,else,extends,final,%
    if,import,%
    match,module,new,null,object,override,package,private,protected,%
    public,return,super,this,throw,trait,try,type,val,var,with,implicit,%
    macro,sealed,%
  },%
  sensitive,%
  morecomment=[l]//,%
  morecomment=[s]{/*}{*/},%
  morestring=[b]",%
  morestring=[b]',%
  showstringspaces=false%
}[keywords,comments,strings]%
\begin{document}
\mainmatter              
%
\title{Observable atomic consistency for CvRDTs}
\titlerunning{Observable atomic consistency for CvRDTs (Extended Version)}
\author{Xin Zhao \and Philipp Haller}

\authorrunning{X. Zhao and P. Haller} 

\tocauthor{Xin Zhao and Philipp Haller}

\institute{KTH Royal Institute of Technology, Stockholm, Sweden\\
\email{\{xizhao, phaller\}@kth.se}}




\maketitle              

\begin{abstract}
The development of distributed systems requires developers to balance the need for consistency, availability, and partition tolerance. Conflict-free replicated data types (CRDTs) are widely used in eventually consistent systems to reduce concurrency control. However, due to the lack of consistent non-monotonic operations, the usage of CRDTs can be difficult.

In this paper, we propose a new consistency protocol, the observable atomic consistency protocol (OACP). OACP enables a principled relaxation of strong consistency to improve performance in specific scenarios. OACP combines the advantages of mergeable data types, specifically, convergent replicated data types, and reliable total order broadcast to provide on-demand strong consistency. By providing observable atomic consistency, OACP avoids the anomalies of related protocols.

We provide a distributed, cluster-enabled implementation of OACP based on Akka, a widely-used actor-based middleware. Our experimental evaluation shows that OACP can reduce the coordination between replicas compared to other protocols providing atomic consistency in several benchmarks. Our results also suggest that OACP gains availability through mergeable data types and provides acceptable latency for achieving strong consistency.

\keywords{atomic consistency, eventual consistency, actor model, distributed programming, programming language}
\end{abstract}

\section{Introduction}
Conflict-free replicated data types (CRDTs)~\cite{MarcS11} are widely used in industrial distributed systems such as Riak~\cite{BrownCME14,RiakDTSource} and Cassandra. They are objects which can be updated concurrently without consensus and eventually converge to the same state if all updates are executed by all replicas eventually. Thus, they can provide high availability and scalability for replicated shared data. However, the main challenges are the fact that they only provide eventual consistency and they necessarily provide only a restricted set of operations. In particular, they do not support consistent non-monotonic operations; for example, read operations may return outdated values before the replicas have converged. This makes the usage of CRDTs difficult.

There are two principal approaches of CRDTs, one is operation-based CRDTs and the other is state-based CRDTs. Operation-based CRDTs are also called CmRDTs, and they send only the update operations which are commutative. State-based CRDTs are also called CvRDTs, and they propagate the whole state to other replicas which can be merged by a commutative function. There are many specifications for CRDT counters, sets, and registers.

In this paper, we focus on CvRDTs and we address these challenges by extending CvRDTs with on-demand strong consistency. We devise a novel observable atomic consistency protocol (OACP) which avoids some anomalies of other recently proposed consistency models.

\subsubsection{Contributions.}
This paper makes the following contributions:
\begin{enumerate}
\item We introduce the observable atomic consistency (OAC) model which
  enables a novel extension of CvRDTs with non-monotonic
  operations. Thus, we lift a major limitation of CvRDTs, and
  significantly simplify programming with CvRDTs. We also provide a
  precise and formal definition of the OAC model.

\item We prove that systems providing observable atomic consistency
  are state convergent. The paper summarizes our definitions and
  results, while our companion technical report~\cite{ZhaoH18}
  contains the complete proofs.

\item We provide the observable atomic consistency protocol (OACP)
  which guarantees observable atomic consistency.

\item We provide a distributed implementation of OACP which is enabled
  to run on clusters. Our implementation is based on the widely-used
  Akka~\cite{Akka} actor-based middleware. The system is available
  open-source on GitHub.\footnote{See
    \url{https://github.com/CynthiaZ92/OACP}}

\item We provide an experimental evaluation of OACP, including latency, throughput and coordination. Our evaluation shows that OACP benefits more when there are more commutative operations and can reduce the number of exchanged protocol messages compared to the baseline protocol. Using the case study of a Twitter-like microblogging service, we experimentally evaluate optimization of OACP, which we call O$^2$ACP.
\end{enumerate}

The rest of the paper is organized as follows. Section~\ref{sec:overview} illustrates how application programmers use OACP. In Section~\ref{sec:consistency} we formalize the observable atomic consistency (OAC) model, and prove that systems providing OAC are state convergent. Section~\ref{sec:oacp} explains the observable atomic consistency protocol (OACP), and compares it to the Global Sequence Protocol~\cite{BurckhardtLPF15} (GSP). In Section~\ref{sec:eval} we present a performance evaluation of an actor-based implementation of OACP using microbenchmarks as well as a Twitter-like application. Section~\ref{sec:rel-work} discusses related work, and Section~\ref{sec:conclusion} concludes.

\section{Overview}\label{sec:overview}
We provide a usage overview of the OACP system from the perspective of a programmer to provide a user-friendly interface towards distributed application development. First, we introduce the system structure, and then we demonstrate how to use the provided API through an example.

\paragraph{System structure.}
Figure~\ref{graph:system} shows the system structure. There are three layers from bottom to top: storage, distributed protocol, and application. We first focus on the top layer and discuss how applications interface with the protocol.

\begin{figure}[!ht]
\centering
\includegraphics[width=1.50in]{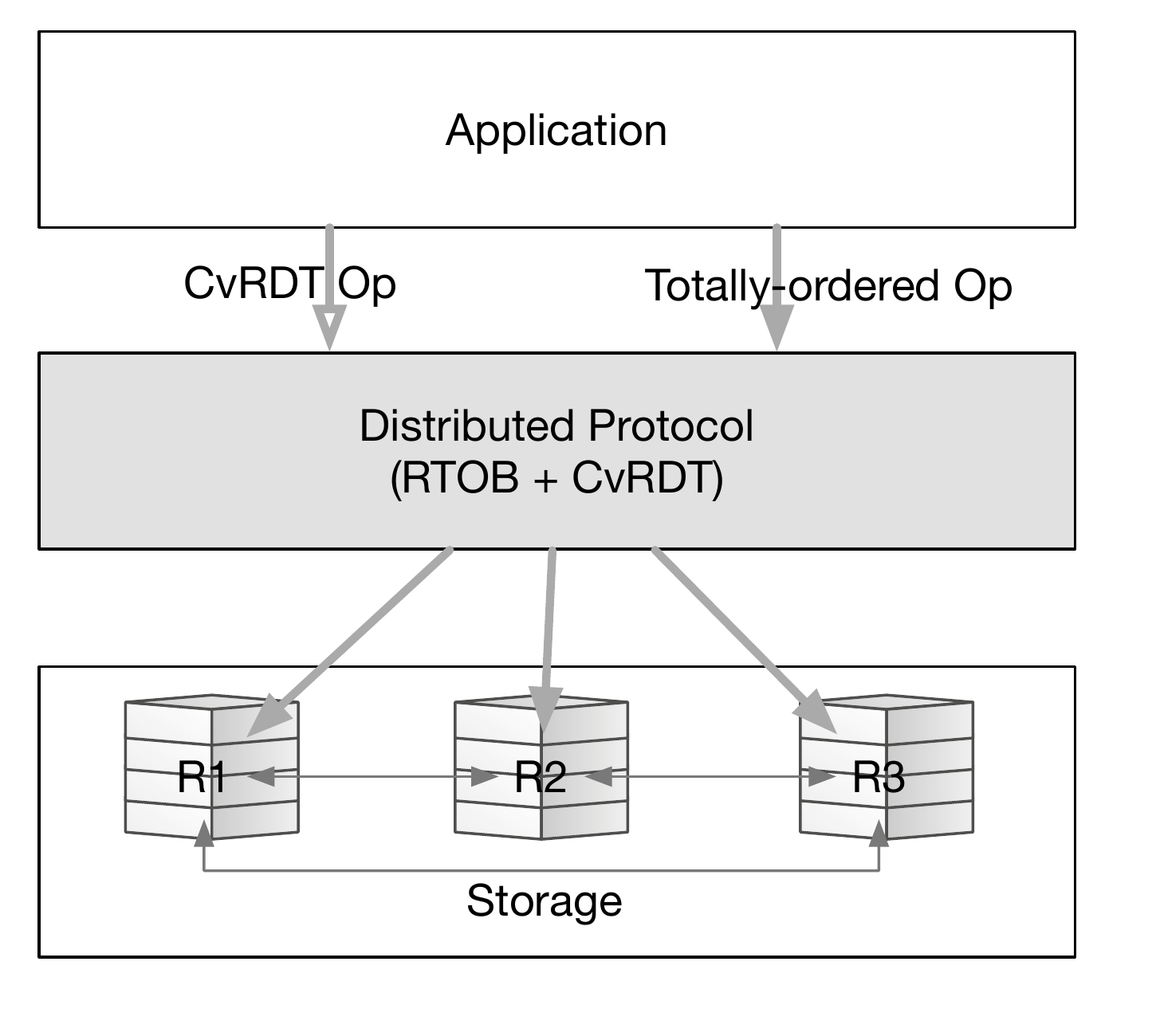}
\caption{High-level view of OACP}\label{graph:system}
\end{figure}

The operations submitted by the application are divided into two categories: CvRDT operations (CvOps) and totally-ordered operations (TOps). CvOps are commutative. TOps are supported by reliable total order broadcast (RTOB)~\cite{DefSchUrb04}, so that their ordering is preserved across the entire system.

The submission of a TOp causes all replicas to atomically (a) synchronize their convergent states, and (b) lock their convergent states. A replica with locked convergent state buffers CvOps until the original TOp has been committed. Moreover, at the point when a TOp is executed, all replicas are guaranteed to have consistent convergent states. Thus, submitting a TOp ensures the consistency of all replicas, including their convergent states.

\paragraph{User API.}
The user-facing API consists of the following three parts.
\begin{itemize}
    \item trait \verb|CvRDT[T]|
    \item CvOp (CvValue, CvUpdate)
    \item TOp (TUpdate)
\end{itemize}
\noindent
The \verb|CvRDT| trait\footnote{A trait in Scala can be thought of as similar to a Java interface enabling a safe form of multiple inheritances.} exposes the implementation of CvRDT to developers which allows them to define their own CvRDT types and operations which are provided by the trait such as initialisation, add, remove and merge.

The last two methods provide access to use the OACP, allowing developers to decide whether the operation should be operated as CvRDTs or ordered messages. CvOps are used when one wants to gain benefits from high availability since they are directly sent to the closest available server. While TOps are good options if the developer wants to make all the replicas reach the same state for some essential operations. These two message types give developers more flexibility when they want to achieve certain consistency as well as the performance of the implementation. The following is an example showing how to use the interface.

\paragraph{Resettable counter.}

The grow-only counter (GCounter) is one of the most basic counters which is widely used, e.g., in real-time analytics or in distributed gaming. It is a CvRDT which only supports increment and merge operations. However, when a system employs a GCounter to achieve eventual consistency, often a special ``reset'' operation is needed for resetting the counter to its default initial state. In other words, we need an ``observed-reset''~\cite{BaqueroAL16} operation for GCounter to go back to the bottom state, which means when ``reset'' is invoked, all effects of the GCounter observed in different replicas should be equivalently reset. However, the standard GCounter cannot solve this problem; this limitation is also well-known in the popular Riak DT implementation.

Thus, we need an implementation of a resettable counter to make sure that all the replicas are reset at the same time. A straightforward solution is to define ``reset'' as a totally-ordered operation, leveraging the property of TOps.

Figure~\ref{list:gcounter} shows a GCounter definition in Scala. We extend the \verb|CvRDT| trait to have an instance of GCounter which supports operations such as \verb|incr| and \verb|merge|. Each replica in the cluster is assigned an ID; this enables each \verb|GCounter| instance to increment locally. When merging the states of two GCounters, we take the maximum counter of each index. The \verb|compare| method is used to express the partial order relationship between different \verb|GCounter|s.

\begin{figure}[h!]
  \begin{lstlisting}[numbers=left, numberstyle=\scriptsize\color{gray}\ttfamily, backgroundcolor = \color{white}]
trait CvRDT[T] {
  def myID(): Int
  def merge(other: T): Unit
  def compare(other: T): Boolean
}

abstract class GCounter extends CvRDT[GCounter] {
  val p: Array[Int] = Array.ofDim[Int](3)
  def incr(): Unit = {
    val id = myID()
    p(id) = p(id) + 1
  }
  def merge(other: GCounter): Unit =
    for (i <- 0 until p.length)
      p(i) = math.max(p(i), other.p(i))
  def compare(other: GCounter): Boolean =
    (0 until p.length).forall(i => p(i) <= other.p(i))
}
  \end{lstlisting}\caption{GCounter in Scala.}\label{list:gcounter}
\end{figure}

Then the message handlers need to be defined on the client side so that OACP system can recognize the behavior. We extend the client actor by OACP protocol to connect application layer with OACP protocol. The developer only needs to define which message to send using the following ``Akka-style'' message handler. When the CounterClient receives a certain message, it behaves according to the user's definition.

\begin{lstlisting}[numbers=left, numberstyle=\scriptsize\color{gray}\ttfamily, backgroundcolor = \color{white}]
class CounterClient extends Protocol[GCounter] {
  val CounterClientBehavior: Receive = {
    case Incr  => self forward CvOp("incr")
    case Reset => self forward TOp("Reset")
    ...
  }
  override def receive = CounterClientBehavior.orElse(super.receive)
  ...
}
\end{lstlisting}
\noindent
For example, case \verb|Incr| handles a CvOp which is forwarded to the protocol layer.

\section{Observable atomic consistency}\label{sec:consistency}

We now formalize the consistency model of OACP.

\begin{definition}[CvT order]
Given a set of operations $U = C \cup T$ where $C \cap T = \varnothing$, a CvT order is a partial order $O=(U,\prec)$ with the following restrictions:
\begin{itemize}
  \item $\forall u, v \in T$ such that $u \neq v.~u \prec v \lor v \prec u$
  \item $\forall p \in C$, $u \in T.~p \prec u \lor u \prec p$
\end{itemize}
\end{definition}

According to the transitivity of the partial order, we could derive that $\forall l, m, n \in U$ such that $l \prec m, m \prec n$. $l \prec n$.

\begin{definition}[Cv-set]

  Given a set of operations $U = C \cup T$ where $C \cap T =
  \varnothing$, a Cv-set $C_i$ is a set of $C$
  operations with the restriction that:

  \begin{itemize}
    \item $\forall p, q \in C_i \Rightarrow p \nprec q \land q \nprec p$
    \item $\forall p \in C \setminus C_i.\exists q \in C_i$ such that $p \prec q \lor q \prec p$
  \end{itemize}
\end{definition}

\begin{figure}[h!]
        \centering
        \includegraphics[width=0.5\textwidth]{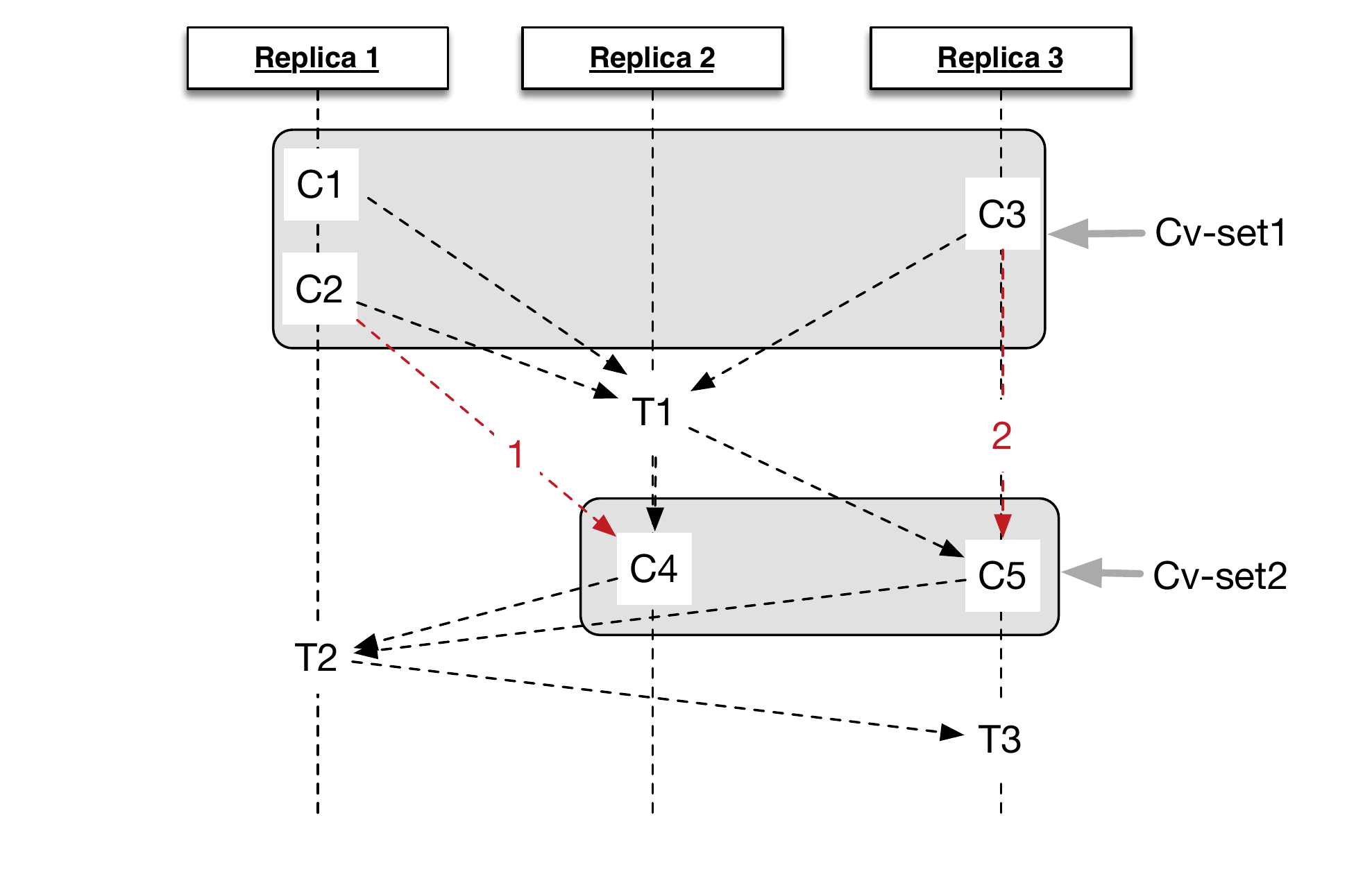}
        \caption{CvT order of operations.}\label{fig:mnorder}
\end{figure}

In Figure~\ref{fig:mnorder}, the partial order is labeled with black dash arrow, while we could derive line 1 and 2 (red dash arrow) from the transitivity of $\prec$. There are two different Cv-sets in this situation while the CvRDT updates inside have no partial order relationship.

In CvT order, only the operations in one same Cv-set could happen concurrently. Now we consider the operations on different sites. Suppose each site $i$ executes a linear extension compatible with the CvT order. Then, the replicated system with $n$ sites provides local atomic consistency, which is defined as follows.

\begin{definition}[Local atomic consistency (LAC)]
  A replicated system provides local atomic consistency (LAC) if each site $i$ applies operations according to a linear extension of the CvT order.
\end{definition}

The three replicas in Figure~\ref{fig:mnorder} could have different linear extensions of the CvT order. However, despite possible re-orderings, LAC guarantees state convergence.


\begin{definition}[State convergence]
  A LAC system is state convergent if all linear extensions of the underlying CvT order $O$ reach the same state $S$.
\end{definition}

\begin{theorem}\label{lac:stateconvergent}
  Given a CvT order, if all operations in each $Cv$-set are commutative, then any LAC system is state convergent.
\end{theorem}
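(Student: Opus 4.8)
The plan is to show that any two linear extensions of the CvT order $O$ reach the same state, by transforming one into the other through a sequence of adjacent transpositions, each of which preserves the resulting state. First I would observe that a linear extension is a total order on $U = C \cup T$ refining $\prec$, and that by a standard result any two linear extensions of a finite poset are connected by a finite sequence of swaps of adjacent elements that are incomparable in $O$. So it suffices to prove: if $u, v$ are consecutive in some linear extension and $u \nprec v$ and $v \nprec u$, then executing $\ldots, u, v, \ldots$ yields the same state as executing $\ldots, v, u, \ldots$.

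The key step is a case analysis on where $u$ and $v$ live, using the two CvT-order restrictions. The restriction $\forall u,v \in T.\, u \neq v \Rightarrow u \prec v \lor v \prec u$ means two distinct $T$-operations are never incomparable, so the case $u, v \in T$ is vacuous. The restriction $\forall p \in C, u \in T.\, p \prec u \lor u \prec p$ means a $C$-operation and a $T$-operation are never incomparable, so the mixed case is also vacuous. Hence the only way $u$ and $v$ can be incomparable is $u, v \in C$. Then, by the first clause of the Cv-set definition together with its second clause (every $C$-operation lies in a Cv-set, and operations drawn from distinct Cv-sets are ordered), $u$ and $v$ must belong to the \emph{same} Cv-set $C_i$. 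By the theorem's hypothesis, all operations within a Cv-set are commutative, so swapping the adjacent pair $u, v$ does not change the state. This establishes the single-swap invariance.

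Finally I would assemble the pieces: start from an arbitrary linear extension $L_1$ and a target linear extension $L_2$, connect them by swaps of adjacent incomparable elements, and apply the single-swap invariance inductively along this chain to conclude that $L_1$ and $L_2$ produce the same state $S$. Since $L_1, L_2$ were arbitrary, the LAC system is state convergent by definition.

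The main obstacle is not any single step but making precise the notion ``executing a linear extension yields a state'' and the commutativity hypothesis in a way that lets adjacent swaps compose cleanly; concretely, one must argue that the prefixes before the swapped pair are identical in both orderings (true, since only $u$ and $v$ change position) so that $u$ and $v$ act on the \emph{same} intermediate state, and that commutativity of the two operations is exactly the statement that their composition in either order on that state agrees. A secondary subtlety is justifying the ``any two linear extensions are swap-connected'' fact; I would either cite it as folklore or give a short induction on the number of inversions between $L_1$ and $L_2$.
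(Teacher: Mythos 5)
Your proposal is correct and follows essentially the same route as the paper: your case analysis showing that an adjacent incomparable pair must consist of two $C$-operations in the same Cv-set is the paper's Lemma~\ref{lemma 1}, your single-swap invariance is Lemma~\ref{lemma 2}, and your chain of adjacent transpositions between two linear extensions is the paper's concluding argument. Your added care about identical prefixes and the swap-connectedness of linear extensions only makes the same argument more precise.
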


In order to give a complete proof for theorem~\ref{lac:stateconvergent}, we first introduce the following lemmas and their proofs.
\begin{lemma}\label{lemma 1}
Given a legal serialization $O_i = (U, <_i)$ of CvT order $O=(U, \prec)$, if $\exists u, v \in U$ such that $u <_i v$ and $u \nprec v$, then $\exists ! C_{i} \in U$ such that $u \in C_{i} \wedge v \in C_{i}$.
\end{lemma}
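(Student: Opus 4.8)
The plan is to argue by contradiction, exploiting the two defining restrictions of a CvT order together with the structure of Cv-sets. Suppose $u <_i v$ and $u \nprec v$. Since $O_i$ is a linear extension (legal serialization) of $O$, incomparability in $\prec$ is consistent with $u <_i v$, so this situation is not immediately contradictory; the claim is rather that incomparability forces $u$ and $v$ to lie in a common Cv-set, and that this set is unique.

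First I would establish that both $u$ and $v$ must be $C$-operations. If either were a $T$-operation, then by the CvT-order restrictions it would be comparable under $\prec$ to the other element (two distinct $T$-operations are totally ordered; a $T$-operation and any $C$-operation are ordered), contradicting $u \nprec v$ together with $v \nprec u$ — note that $u <_i v$ as a linear extension also rules out $v \prec u$, so we genuinely have $u$ and $v$ $\prec$-incomparable. Hence $u, v \in C$.

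Next I would locate a Cv-set containing both. By the second restriction in the definition of Cv-set, every $C$-operation not already in a given Cv-set is $\prec$-comparable to some element of that set; combined with the first restriction (elements of a single Cv-set are pairwise incomparable), the Cv-sets partition $C$ into $\prec$-antichains that are ``maximal'' with respect to the comparability structure. I would show that $u$ and $v$, being incomparable, cannot be separated into two different Cv-sets: if $u \in C_a$ and $v \in C_b$ with $a \neq b$, then since $v \notin C_a$ there is some $q \in C_a$ with $v \prec q$ or $q \prec v$, and symmetrically some element of $C_b$ comparable to $u$; chasing these comparabilities through transitivity of $\prec$ and the antichain property of each Cv-set should force either $u \prec v$ or $v \prec u$, a contradiction. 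So $u$ and $v$ lie in the same Cv-set, and since the Cv-sets are disjoint that set is unique — giving the $\exists!$ in the statement.

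The main obstacle I anticipate is the separation step: making precise why two incomparable $C$-operations cannot fall in distinct Cv-sets. This hinges on how much structure the Cv-set axioms actually pin down — in particular whether the Cv-sets form a genuine partition of $C$ and whether the comparability relation descends to a total order on the set of Cv-sets. I would likely need an auxiliary observation (perhaps stated as a preliminary fact or folded into the proof) that for any two distinct Cv-sets $C_a, C_b$, \emph{every} element of $C_a$ is $\prec$-comparable to \emph{every} element of $C_b$ in a consistent direction, so that the Cv-sets themselves are linearly ordered; once that is in hand, incomparability of $u$ and $v$ immediately precludes their being in different sets. The rest is routine case analysis on $\prec$.
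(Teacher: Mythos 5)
Your overall route is the same as the paper's: a case analysis showing that whenever $u$ or $v$ is a $T$-operation, or they lie in distinct Cv-sets, the CvT axioms (plus the fact that a linear extension cannot invert $\prec$) force $u \prec v$, so the only configuration compatible with $u <_i v$ and $u \nprec v$ is that both are $C$-operations in one common Cv-set, with uniqueness from disjointness. The paper's proof is exactly this five-case argument, and its Case~4 asserts, just as you would need to, that $u \in C_i$, $v \in C_j$ with $C_i \neq C_j$ implies $C_i \prec C_j$ elementwise and hence $u \prec v$.

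The step you flag as the anticipated obstacle is, however, a genuine gap in your write-up, and the ``chasing transitivity'' plan as described will not close it from the stated definitions alone. A Cv-set is defined only as a maximal antichain of $\prec$ restricted to $C$; maximal antichains of a general partial order need not be pairwise disjoint, need not be elementwise comparable to one another, and two incomparable elements can lie in several distinct maximal antichains (e.g.\ if some $C$-operations happened to be directly $\prec$-related without an intervening $T$-operation, nothing in the CvT-order definition forbids this). So both the auxiliary fact you want (distinct Cv-sets linearly ordered elementwise) and the disjointness you invoke for uniqueness require the additional, implicitly intended structure that $C$--$C$ comparabilities arise only through $T$-operations, i.e.\ that any two $C$-operations in different Cv-sets are separated by some $t \in T$ with $u \prec t \prec v$ or $v \prec t \prec u$; once that is made explicit, transitivity gives the comparability in one line and your argument (like the paper's Case~4, which asserts it without proof) goes through. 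In short: you identified the crux correctly, but you need to state and justify that separation property rather than expect it to fall out of the antichain axioms.
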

\begin{proof}
Let's consider two operations in the following cases and find which situation satifies the requriment.\\
\textbf{Case 1:} if $u \in T$ and $v \in T$, $u <_i v$ then according to the definition, $u \prec v$. \\
\textbf{Case 2:} if $u \in T$ and $v \in C$, $u <_i v$, there must $\exists C_{i}$ such that $v \in C_{i}$, since $u \prec (\forall c|c \in C_{i}$, $u \prec v$.\\
\textbf{Case 3:} if $u \in C$ and $v \in T$, $u <_i v$, using the same argument from Case 2, we have $u \prec v$.\\
\textbf{Case 4:} if $u \in C$ and $v \in C$, $u \in C_{i} \wedge v \in C_{j} \wedge C_{i} \cap C_{j} = \varnothing$, $u <_i v$ so that $C_{i} \prec C_{j}$, then we have $u \prec v$.\\
\textbf{Case 5:} if $u \in C_{i}$ and $v \in C_{i}$, $u \in C_{i} \wedge v \in C_{i}$, we know from the definition that $u \nprec v$.\\
Consider the above five cases, only case 5 can satisfy $u <_i v \wedge u\nprec v$, so the lemma is correct.
\end{proof}

\begin{lemma}\label{lemma 2}
Assume $O_i = (U,<_i)$ and $O_j = (U,<_j)$ are both legal serializations of CvT order $O = (U,\prec)$ that are identical except for two adjacent operations $a$ and $b$ such that $a <_i b$ and $b <_j a$ and that all operations $c \in C$ are commutive inside each single $Cv$-set. Then $S(O_i) = S(O_j)$.
\end{lemma}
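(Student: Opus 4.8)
The plan is to prove Lemma~\ref{lemma 2} by a case analysis on the relationship between the two adjacent operations $a$ and $b$, using Lemma~\ref{lemma 1} as the key structural input. Since $O_i$ and $O_j$ differ only by swapping the adjacent pair $a,b$, it suffices to show that applying $a$ then $b$ to the common state reached just before this pair yields the same state as applying $b$ then $a$; everything before and after the swapped pair is identical in both serializations, so equality of the intermediate state propagates to equality of the final state $S(O_i) = S(O_j)$.

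First I would observe that since $a <_i b$ while $b <_j a$, and both are legal serializations of the \emph{same} CvT order $\prec$, neither $a \prec b$ nor $b \prec a$ can hold: if, say, $a \prec b$ held, then every legal serialization (including $O_j$) would have to respect it, contradicting $b <_j a$. So $a$ and $b$ are incomparable in $\prec$. Now I apply Lemma~\ref{lemma 1} to the pair $a <_i b$ with $a \nprec b$: it gives a unique Cv-set $C_k$ with $a \in C_k$ and $b \in C_k$. In particular, by the definition of Cv-set, both $a$ and $b$ lie in $C$ (the CvRDT operations), and they belong to the same Cv-set. By the hypothesis of the lemma, all operations inside a single Cv-set commute, so $a$ and $b$ commute as state transformers. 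Hence applying them in either order to the common pre-state gives the same result.

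The main obstacle — or rather the point that needs the most care — is making precise what ``the common pre-state'' is and why it is genuinely common to both serializations. Because $O_i$ and $O_j$ agree on the set of operations ordered before the pair $\{a,b\}$ and on the order among them, the state obtained by executing that prefix is identical in both runs; this is where I would be explicit that $S$ is defined as the left fold of the operations' effects over the CvT-order's linear extension, so that a shared prefix produces a shared state. After the swap, the suffix is again identical in both, so the two folds agree pointwise from that point on. The only subtlety is ensuring that ``adjacent'' really means no other operation sits between $a$ and $b$ in either serialization, which is given in the hypothesis, so the commutation argument applies directly with no interleaved effects to worry about. I would conclude by noting that commutativity of $a$ and $b$ closes the gap, giving $S(O_i) = S(O_j)$.
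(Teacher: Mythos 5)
Your proposal is correct and follows essentially the same route as the paper's proof: use Lemma~\ref{lemma 1} to conclude that the swapped adjacent pair must lie in a common Cv-set, invoke the commutativity hypothesis there, and propagate equality through the shared prefix and suffix. You are in fact slightly more explicit than the paper in justifying that $a$ and $b$ are $\prec$-incomparable (from the legality of both serializations), which is a welcome clarification rather than a deviation.
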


\begin{proof}[Proof for lemma]
Let $P$ and $S$ be the greatest common prefix and suffix of $O_i$ and $O_j$, then $S_{ab}=S(P)+a+b$, $S_{ba}=S(P)+b+a$.

From lemma~\ref{lemma 1}, we know that only the CvRDT operations inside the same $Cv$-set can be the two adjacent operations $a$ and $b$. The operation in one $Cv$-set commute, so $S_{ab} = S_{ba}$. And we could also derive that $S_{ab} + S(Q) = S_{ba} + S(Q)$. Since $S_{ab} + S(Q) = S_(O_i)$ and $S_{ba} + S(Q) = S_(O_j)$ then we have $S(O_i) = S(O_j)$
\end{proof}

Now we come back to the proof of Theorem 1.1.
\begin{proof}[Proof for theorem]
Considering when:

(1) $O_i = O_j$. Then we have $S(O_i) = S(O_j)$.
{}
(2) $O_i \neq O_j$, then there exists at least one pair of adjacent operations such as $u$ and $v$ that $u <_i v$ and $v <_j u$. From lemma~\ref{lemma 2} we know if we swap the sequence of $u$ and $v$ in $O_i$, we could get $O_{i+1} = O_i$. If $O_{i+1} \neq O_j$, then we continue to find adjacent operation pairs in $O_{i+1}$ until we finally construct a sequence of operations which is equal to $O_j$.

Then we proved that any LAC system is state convergent.
\end{proof}

\begin{definition}[Observable atomic consistency]\label{oacpdefinition}
  A replicated system provides observable atomic consistency (OAC) if
  it provides local atomic consistency and for all $p \in C$, $u \in
  T$.~(a) $p \prec_{PO} u$ in program order (of some client) implies
  that $p \prec u$ in the CvT order, and (b) $u \prec_{PO} p$ in
  program order (of some client) implies that $u \prec p$ in the CvT
  order.
\end{definition}

The constraints for OAC guarantee that the system state is consistent with the order of operations on each client side. Since OAC is a particular case for LAC, the state convergence also holds for OAC, and we have the following corollary:

\begin{corollary}
  Given a CvT order, if all operations in each $Cv$-set are commutative, then any OAC system is state convergent.
\end{corollary}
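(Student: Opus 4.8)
The plan is to observe that this corollary is an immediate consequence of Theorem~\ref{lac:stateconvergent} together with Definition~\ref{oacpdefinition}, so the proof is essentially a one-line reduction. First I would note that every OAC system is, by Definition~\ref{oacpdefinition}, also a LAC system: the definition of OAC explicitly requires the system to provide local atomic consistency, and it merely adds two further constraints (clauses (a) and (b)) relating program order at each client to the CvT order. These extra constraints restrict which CvT orders can arise, but they do not weaken the LAC guarantee in any way.

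Second, I would invoke the hypothesis of the corollary — that all operations in each $Cv$-set are commutative — and feed it, together with the underlying CvT order $O$, into Theorem~\ref{lac:stateconvergent}. Since the OAC system is in particular a LAC system operating over that CvT order, the theorem applies verbatim and yields that all linear extensions of $O$ reach the same state $S$. By the Definition of state convergence, that is exactly the statement that the OAC system is state convergent.

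The only thing worth spelling out is why the additional OAC constraints cause no trouble: they only prune the set of admissible CvT orders (by forbidding certain orderings between a $C$-operation and a $T$-operation that contradict some client's program order), and for whatever CvT order does result, Theorem~\ref{lac:stateconvergent} still guarantees that every legal serialization of it converges. In other words, state convergence is a property that holds for \emph{every} CvT order under the commutativity hypothesis, so it holds a fortiori for the CvT orders permitted by OAC. I do not anticipate any real obstacle here; the entire content lies in recognizing the inclusion ``OAC $\subseteq$ LAC'' and citing the already-proved theorem. A formal write-up would read: ``Let an OAC system be given with the stated commutativity hypothesis. By Definition~\ref{oacpdefinition} it provides LAC over some CvT order $O$. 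By Theorem~\ref{lac:stateconvergent}, all linear extensions of $O$ reach the same state, i.e.\ the system is state convergent. \qed''
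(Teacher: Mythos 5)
Your proposal is correct and matches the paper's own reasoning: the paper justifies the corollary precisely by noting that OAC is a particular case of LAC (Definition~\ref{oacpdefinition} requires LAC plus the two program-order clauses) and then applying Theorem~\ref{lac:stateconvergent}. Your additional remark that the OAC clauses only constrain which CvT orders can arise, without weakening the LAC guarantee, is a fine clarification but does not change the argument.
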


\paragraph{Discussion.}
Observable atomic consistency (OAC) is a novel consistency model. It guarantees that invocations of totally-ordered operations (TOps) establish consistency across all replicas. Between two subsequent TOps there can be invocations of convergent operations (CvOps) which can be re-ordered as long as they happen between the two original TOps. We now compare our consistency model with previous notions.

\paragraph*{Comparison to RedBlue Consistency.}

A closely related consistency model is RedBlue consistency~\cite{LiPCGPR12}. Red operations are the ones which need to be totally ordered while the blue ones can commute globally. Two operations can commute means that the order of the operations does not affect the final result. In order to adapt an existing system to a RedBlue system, {\em shadow operations} with commutativity property need to be created and then adjust the application to use these shadow operations. The shadow operations can ``introduce some surprising anomalies to a user experience''~\cite{LiPCGPR12}. In other words, the final system state might not take all messages into account that were received by the different replicas. Thus, RedBlue consistency can only maintain a local view of the system. In OAC, the CvRDT updates can only commute inside a specific scope, namely, between two totally-ordered operations. This restricts the flexibility of CvRDT updates, but at the same time, it provides a consistent view of the state. Importantly, the final system state always matches the state resulting from a linear extension of the original partial order of the operations. An example comparing OAC and RedBlue consistency is included in our companion technical report~\cite{ZhaoH18}.

Here is an example for comparison. There are Alice in the EU and Bob in the US, performing some operations on the same account. Suppose the sequence of bank operations arriving at any of the replicas is as follows:
Alice(deposit(20)) $\rightarrow$ Bob(accrueinterest()) $\rightarrow$ Bob(withdraw(60)) $\rightarrow$ Bob(deposit(10)) $\rightarrow$ Alice(withdraw(70)).

RedBlue consistency requires the definition of withdrawAck' and withdrawFail' as shadow operations for an original ``withdraw'' operation. Then, withdrawAck' is marked as a red operation (i.e., totally ordered).

In RedBlue consistency, when Alice wants to execute withdraw(70) (blue frame in Figure~\ref{graph:bank}(A)), there is an immediate response withdrawFail'() since it is a blue operation. In the local view of Alice, the deposit(10) from Bob is not visible yet. However, in the assumed arriving sequence of operations, deposit(10) from Bob was already received by one of the replicas; thus, in principle the withdraw could have succeeded.

\begin{figure}[h!]
        \includegraphics[width=\textwidth]{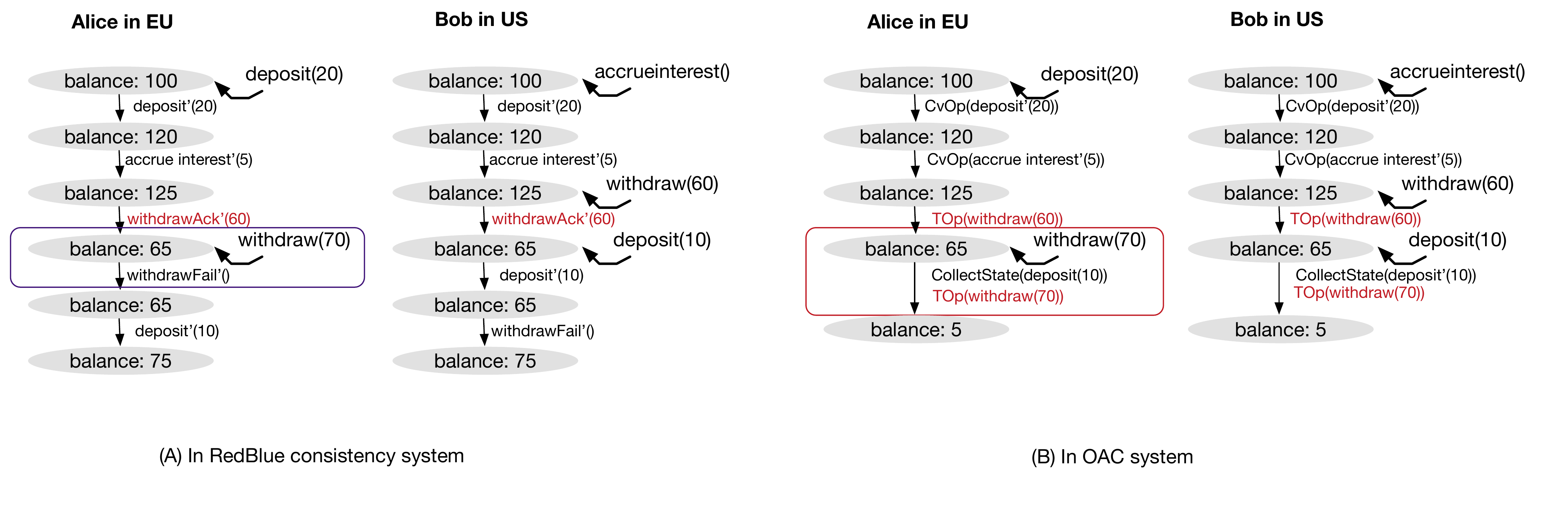}
        \caption{Comparison between RedBlue consistency and OAC}
        \label{graph:bank}
\end{figure}

Compared with RedBlue consistency, in OAC system, the evolution of the system state looks as shown in Figure~\ref{graph:bank}(B). we only need to define which operations should be CvOps and which should be TOps. Meanwhile, when Alice wants to execute withdraw(70), since it is a TOp, the system will make all the replicas reach the same state to make sure Alice has a consistent view of the system (red frame in Figure~\ref{graph:bank}(B)).

\section{Observable Atomic Consistency Protocol (OACP)}
After the definition of OAC in Section~\ref{sec:oac}, now we denote the semantics of the protocol using a so-called data model together with the pseudocode.

\subsection{Data model}
In this part, we describe the data models that are used in the protocol. Consider all the clients and servers as a set of $objects$. In general, we use notation $Op$ to define the operations that are performed on objects. $Op^*$ represents a sequence of operations.

Operation types are classified according to different names. Thus, the function $Eval: Op^* \times Op \rightarrow Val$ presents the result when applying the sequence of operations on one object.

We define abstract type CvUpdate and TUpdate stand for CvRDT updates and totally-ordered updates, $Value$ for result from reading the log, $Read$ for an ordered read operation. Then we could give a data model using the function $OACPVal: Read \times \{CvUpdate, TUpdate\}^* \rightarrow Value$. The protocol will be explained in a client-server topology.

In the following sections, we use $Log$ to record all the sequence of operations and states. The unit of Log is called $Entry$, which contains the object information, number of next log index, CvRDT state $cState$ as well as totally-ordered updates $tUpdate$. We use the denotation $++$ to represent the combination of two sequences.

\section{Observable atomic consistency protocol (OACP)}\label{sec:oacp}

Following the definition of observable atomic consistency~\ref{oacpdefinition}, we now introduce a protocol that enforces this notion of consistency. We present the observable atomic consistency protocol (OACP) in four steps: first, we describe the so-called {\em data model}, similar to prior related work~\cite{BurckhardtLPF15}; second, we describe the client-side protocol; third, we describe the server-side protocol; finally, we discuss differences to the most closely related previous protocol.
{}
\paragraph{Data model.}
Operation types are classified according to different names. Thus, the function $\mathit{Eval}: Op^* \times Op \rightarrow \mathit{Val}$ presents the result when applying the sequence of operations on one object. We define abstract type $\mathit{CvUpdate}$ and $\mathit{TUpdate}$ stand for CvRDT updates and totally-ordered updates, $Value$ for the result from reading the log, $\mathit{Read}$ for an ordered read operation. Then we could give a data model using the function $\mathit{OACPVal}: \mathit{Read} \times \{\mathit{CvUpdate}, \mathit{TUpdate}\}^* \rightarrow \mathit{Value}$. The protocol will be explained in a client-server topology.

In the following, we use the type $\mathit{Log}$ to represent a log that records the (totally-ordered) sequence of all operations and states. Whenever a totally-ordered operation is committed, an entry is created and added to the log. Each log entry is of type $\mathit{Entry}$ which is defined as follows:

\begin{lstlisting}
class Entry { origin: Client, nextLogIndex: N, cState: CvRDT, op: TOp }
\end{lstlisting}
\noindent
Each entry contains a reference to the client that submitted the TOp (\verb|origin|); the index of the next log entry (\verb|nextLogIndex|); the state of the underlying CvRDT (\verb|cState|); and the committed totally-ordered operation (\verb|op|). We use the notation $++$ to represent the concatenation of two sequences.

\paragraph{OACP client protocol.}
 We now show the basic version of the OACP client-side protocol in Listing~\ref{list:egsp-client}. When CvOp invokes, the client will send the update to a random available server; while TOp happens, the client will submit the update to the current leader according to the chosen consensus protocol (in our case is Raft~\cite{OngaroO14}). The abstraction greatly simplifies the distinction between these two kinds of operations on the client side, so that it provides more flexibility for the extension on top of the client actor.

\begin{lstlisting}[numbers=left, numberstyle=\scriptsize\color{gray}\ttfamily, backgroundcolor = \color{white}, caption= Client-side OACP., label = list:egsp-client]
role OACP_Client {
  var result: Promise[Value];
  var response: Promise[Value];

  // client interface
  CvOp(u: CvUpdate) {
    response := new Promise[Value];
    CRDT_submit(u); // send to random replica
    return response;
  }

  TOp(msg: TUpdate | Read): Promise[Value] {
    result := new Promise[Value];
    RTOB_submit(msg); // send to leader replica
    return result;
  }

  // network interface
  onReceive(log: Log) {
    result.complete(OACPVal(updates(log)));
  }

  onReceive(response) {
    response.complete()
  }

  function updates(l: Log): (TUpdate*, cState) = {
    return (l[0].tUpdate ++ ... ++ l[l.length-1].tUpdate, l.cState);
  }
}
\end{lstlisting}
\noindent
An important assumption of the protocol is that the client never invokes an operation before the promise of a previous invocation has been completed, especially when a TOp happens directly after a CvOp, the client needs an acknowledge of the previous message from the server side to invoke the TOp. In this case, the program order on the client side will also be preserved on the server side.

\paragraph{OACP server protocol.}
We now move on to the server side of the protocol. We assume that the application has continuous access to the network. This mirrors the practical usage of many existing applications, such as chat and Twitter-like micro-blogging.

In Listing~\ref{list:egsp-server}, $\mathit{CRDT.merge}$ is an abstract merge operation for any type of CRDT; $\mathit{Log.nextIndex}$ is the next store index for $\mathit{Entry}$ according to Raft protocol; $\mathit{cStateCollect()}$ is a message for getting all the $cState$ from other servers; $\mathit{Broadcast()}$ is a message which will be sent to all the other servers in the cluster; $\mathit{RTOB\_submit}$ wraps an $\mathit{Entry}$ structure which contains the client, the number of current log index, the current monotonic state and the update from the client, add the $\mathit{Entry}$ in the log and broadcast to all the other servers to keep consensus on each replica.

When the server receives the CvOp, it will merge the current CvRDT state and broadcast the change to all the other servers; when TOp is received, the leader server will collect the current states from all the replicas and make an RTOB so that each replica keeps the same log. RTOB in OACP system is implement by Raft consensus protocol, and each server will do commit when they get confirmation from the leader server. When $\mathit{TUpdate}$ and $\mathit{Read}$ handled by non-leader server, it will forward the update to the current leader in the cluster. Moreover we define $\mathit{Read}$ as TOp to guarantee for getting the newest result directly from the leader.

In the protocol description, we set up a special flag ``frozen''. When a TOp is taking place, the flag will turn to true for each replica. For coming operations after that, the system will stash all the updates without changing the current state to keep OAC for all replicas.

\begin{lstlisting}[numbers=left, numberstyle=\scriptsize\color{gray}\ttfamily, backgroundcolor = \color{white}, caption = Server-side OACP (unoptimized)., label = list:egsp-server]
role OACP_Server {
  var currentState: CRDT;
  var log: Log;
  var currentLeader: Server;
  var frozen: Boolean;
  var result: Promise[Value];

  onReceive(u: CvUpdate) {
    if frozen then { buffer.stash(u); }
    else {
      currentState = CRDT.merge(currentState, u);
      Broadcast(currentState);
      client.reply(); //acknowledge to client
    }
  }

  onReceive(msg: TUpdate | Read) {
    if currentRol.isLeader && frozen then { stash(msg); }
    else if (currentRole.isLeader) {
      frozen = true;
      numStateMsgReceived = 0;
      result = new Promise[Value];
      result.onSuccess { v => client.reply(v);}
      Broadcast(GetState);
    }
    else {forward(currentLeader, msg);}
  }

  onReceive(msg: GetState) {
    frozen = true;
    reply(StateIs(currentState));
  }

  onReceive(msg: StateIs) {
    if currentRole.isLeader then {
      numStateMsgReceived += 1;
      currentState = CRDT.merge(currentState, msg.cState);
      if numStateMsgReceived == numReplicas-1 then {
        RTOB(new Entry {
            origin = msg.sender,
            number = Log.nextIndex(log),
            cState = currentState,
            toUpdate = msg });
        Broadcast(Melt);
        result.complete(log);
      }
      if timeout then { // fault handler
        RTOB(new Entry {
            origin = msg.sender,
            number = Log.nextIndex(log),
            cState = Log.cState,
            toUpdate = Recovery });
        Broadcast(Melt);
        result.complete(failure);
        }
      }
    }
  }

  onReceive(msg: Melt) {
    frozen = false;
    buffer.unstash() or discard();
  }
}
\end{lstlisting}

When leader receives acknowledges from majority number of servers, a ``Melt'' message will be sent around to set the ``frozen'' flag to false again. If there are $n$ servers in the cluster, then there will be $2(n-1)$ messages adding to the whole protocol. Thus we come up with the optimized version (see List~\ref{list:egsp-server-op}) to turn the flag to false each time when a server makes a commit in the consensus protocol.

\begin{lstlisting}[numbers=left, numberstyle=\scriptsize\color{gray}\ttfamily, backgroundcolor = \color{white}, caption = Server-side OACP (optimized), label = list:egsp-server-op]
  onCommit(e: Entry) {
    frozen = false;
    buffer.unstash() or discard();
  }
\end{lstlisting}

\paragraph{Optimisation considering the sequence of operations}\label{section:optimization}
Then we go deeper to the inside of OACP. It’s the combination of two types of systems, one is the representative of on-demand strong consistency and one is on behalf of a popular way to achieve strong eventual consistency. Taken the low latency from CRDT together with the consensus from Raft, OACP takes the advantage of both systems. When the two systems need to interface, we need to think about how the system behave.

Now let’s discuss about the how the system behaviour change when two different types of operations happen in certain sequence. There’re four situations which shows in Table~\ref{table:sequence}.
\begin{table}[ht]
\centering
\begin{tabular}{c l l }
  \hline
  No. & Sequence & System Behaviour \\
  \hline
  (1) & CvOp $\rightarrow$ CvOp & CvRDT operation (add, remove, merge) \\
  (2) & TOp $\rightarrow$ TOp & OACP protocol optimization (discussed in Section~\ref{section:twittercompare}) \\
  (3) & CvOp $\rightarrow$ TOp & Freeze system $\rightarrow$ Total order broadcast (implemented by Raft) \\
  (4) & TOp $\rightarrow$ CvOp & Melt system $\rightarrow$ CvRDT operation \\
  \hline
\end{tabular}
\caption{System behaviour in different sequence of operations}\label{table:sequence}
\end{table}

We could illustrate the above chart as follows: (1) CvOp will use CvRDT property to operate so that the order of the operation can be disordered, the system is behaving like the CvRDT system. (2) When two TOps occurs, then the system will behave according to raft protocol. (3) When a TOp happens after the CvOp, first the system need to stop receiving any coming CvOps ("Frozen" state) and then using elected leader to decide the total order for the current state. (4) When a CvOp wants to take effect after the TOp, then it should make the system "Melt" first, and then behave like what described in (1). The interesting aspect we observe when we make this classification is that in situation (3) and (4), we need to deal with the shift state ("Frozen" $\leftrightarrow$ "Melt") of the system. It can be optimised by introducing another parameter called "auto-melt". The property of "auto-melt" is that once defined, the system will decide whether to melt the system automatically. Now le'ts have a look at two examples about how the argument works in different selections.

In the shopping cart example, when we checkout at certain point, the user could still be able to do add or remove operations to another shopping cart. These following MonOps will not be considered for this specific "checkout", but after the checkout, the updates should be able to take effect on the system. At this point, we need to use auto-melt so that the CvOps can take effect after certain TOps. In another example such as a simple bidding system, we could define all the "Bid" operations as CvOps and the final "WinnerCheck" is the TOp. When the "WinnerCheck" is called, the system will freeze as usual but the following "Bid" will be simply dropped. Another bid need to be started manually by calling a special "Melt" message to the system.
\begin{figure}[ht]
\centering
\includegraphics[width=5.00in]{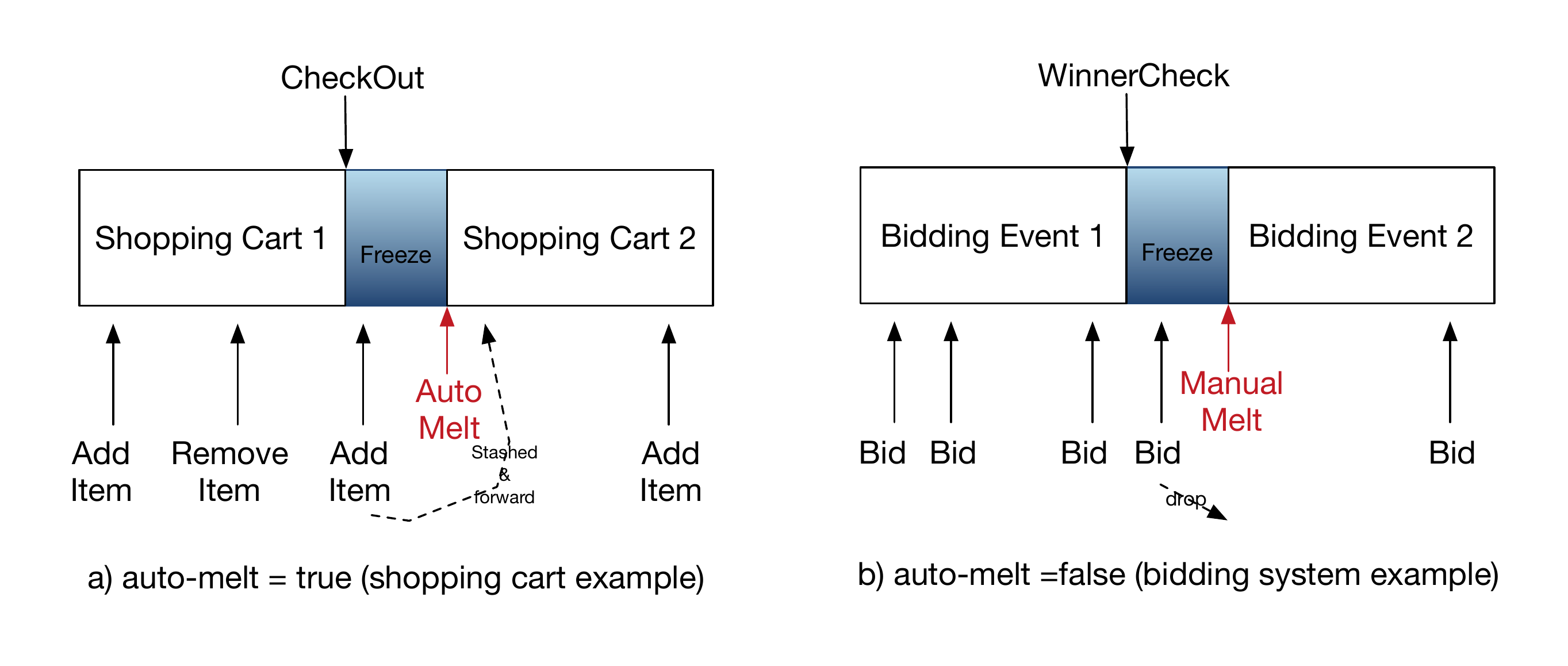}
\caption{Automelt example: Shopping cart vs. Bidding system}\label{graph:automelt}
\end{figure}

\paragraph*{Fault Model.} During the process of RTOB, the failure of the system will be handled by Raft protocol where $n$ nodes crash failure will be tolerated in a $2n+1$ nodes cluster. During the state gathering phase (See Line 38, 47), if the receiving time of leader exceeds the setup timeout duration, then failure recovery strategy is to retrieve the cState in the last log entry and synchronize all the replicas. The OACP protocol provides a neat solution to avoid staleness read in CvRDTs.

\paragraph*{Comparison to GSP.}
GSP is an operational model for replicated shared data. Figure~\ref{graph:GSP} shows the abstraction of GSP protocol. It supports update and read operations from the client. Update operations are stored both in the local pending buffer so that when a read happens, it will perform ``read your own writes'' directly from the local storage. That property makes offline read possible so that even when the network is broken, the application can work properly and the following updates will be stashed in the buffer and resent until the network is recovered again.

\begin{figure}[!ht]
\centering
\includegraphics[width=2.50in]{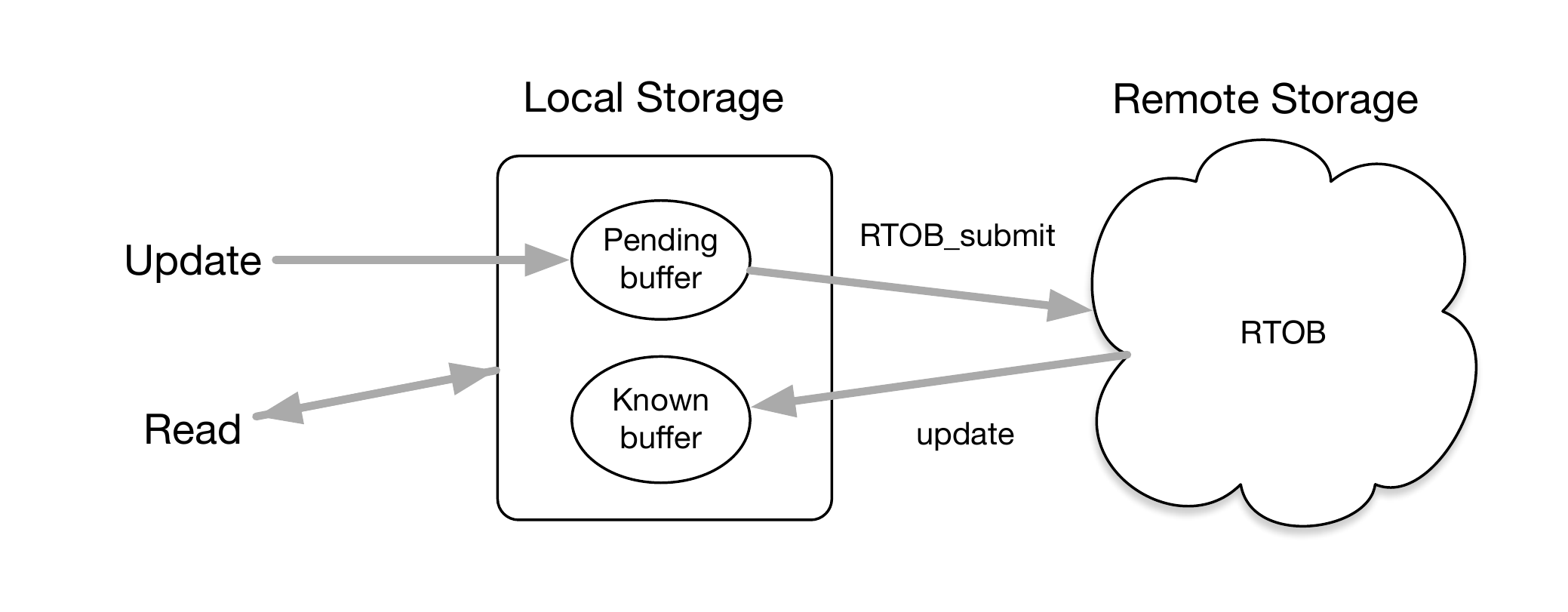}
\caption{GSP protocol abstraction}\label{graph:GSP}
\end{figure}

\begin{figure}[!ht]
\begin{lstlisting}[numbers=left, numberstyle=\scriptsize\color{gray}\ttfamily, backgroundcolor = \color{white}, escapeinside={(*}{*)}]
role Core_GSP_Client{
  known: Round * := [];
  pending: Round * := [];
  round: N := 0;

  //client interface
  update(u: Update) {
    (*\highlight{\texttt{pending := pending . u;}}*)
    RTOB_submit(new Round{origin = this, number = round ++, update = u});
  }

  read(r: Read): Value {
    var compositelog := known . pending;
    return GSPValue(r, updates(compositelog));
  }

  //network interface
  onReceive(r: Round) {
    known := known . r
    if (r.origin = this) {
      assert(r = pending[0]);
      pending := pending[1...];
    }
  }

  class Round {origin: Client, number: N, update: Update}
  function updates(s: Round*): Update* {return s[0].update ... s[s.length - 1].update;}
}
\end{lstlisting}
\caption{Core Global Sequence Protocol (GSP)~\cite{BurckhardtLPF15}}\label{list:gsp}
\end{figure}

GSP also relies on RTOB to send back a totally ordered sequence of updates to the locally known buffer. The existence of known and pending buffer provides the possibility for updating these buffers fully asynchronously. In order to achieve high throughput, it also provides the batching option so that it does not require RTOB for every operation.

Compared with OACP, GSP not only has only totally-ordered operations but also performs locally read operation. When the relative order of operations of different clients is observable, it will provide some difficult condition. Given the example in paper~\cite{BurckhardtLPF15}:

\begin{tabular}{ l l }
  $wr(A, 2)$ & . \\
  . & $wr(B, 1)$ \\
  . & $wr(A, 1)$ \\
  . & $rd(A) \rightarrow 2$ \\
  $rd(B) \rightarrow 0$ & . \\
\end{tabular}

Above is the key-value store shared data model which initially stores 0 for each address. In GSP protocol, such interleaving will be possible since $rd(B)$ can get 0 before the local storage get the update from $wr(B, 1)$.

Local read from GSP has a strange effect that it can not provide the strong consistency when processing locally read. While in OACP, this confusion will not happen since we always process read as a TOp and it will always get the updated state from the server. So only the following is possible, and the observable strong consistency is always preserved:

\begin{tabular}{ l l }
  $wr(A, 2)$ & . \\
  . & $wr(B, 1)$ \\
  . & $wr(A, 1)$ \\
  . & $rd(A) \rightarrow 1$ \\
  $rd(B) \rightarrow 1$ & . \\
\end{tabular}

\section{Performance evaluation}\label{sec:eval}

We evaluate the performance of OACP from the perspective of latency, throughput and coordination. We use a microbenchmark as well as a Twitter-like application inspired by Twissandra~\cite{Twissandra}.

\paragraph{Experimental set-up.}
Our OACP implementation is based on the cluster extension of the widely-used Akka~\cite{Akka} actor-based middleware. The cluster environment is configured using three seed nodes; each seed node runs an actor that detects changes in cluster membership (i.e., nodes joining or leaving the cluster). This enables freely adding and removing cluster nodes.


The experiments on coordination are performed on a 1.6 GHz Intel Core i5 processor with 8 GB 1600 MHz DDR3 memory running macOS 10.13. The experiments on latency and throughput are performed on Amazon EC2 cluster which includes three T2 micro instances (1 vCPU, 1 Gb memory, and EBS-only storage. ) running in Ohio, London, and Syndey. Figure~\ref{table:roundtrip} shows the average round-trip latency between each pair of sites. Our implementation is based on Scala 2.11.8, Akka 2.4.12, AspectJ 1.8.10, and JDK 1.8.0\_162 (build 25.162-b01). The OACP implementation is available open-source on GitHub.\footnote{See \url{https://github.com/CynthiaZ92/OACP}}

\begin{figure}
\begin{center}
\begin{tabular}{ |c|c|c|c| }
 \hline
  & Ohio & London & Sydney \\
  \hline
 Ohio & 0.53ms & 85.6ms & 194ms \\
 \hline
 London &  & 0.42ms & 279ms \\
 \hline
 Sydney &  &  & 0.88ms \\
 \hline
\end{tabular}
\end{center}
\caption{Average round-trip latency between Amazon sites} \label{table:roundtrip}
\end{figure}

\begin{figure}
\centering
\begin{subfigure}[b]{0.4\textwidth}
  \includegraphics[width=\textwidth]{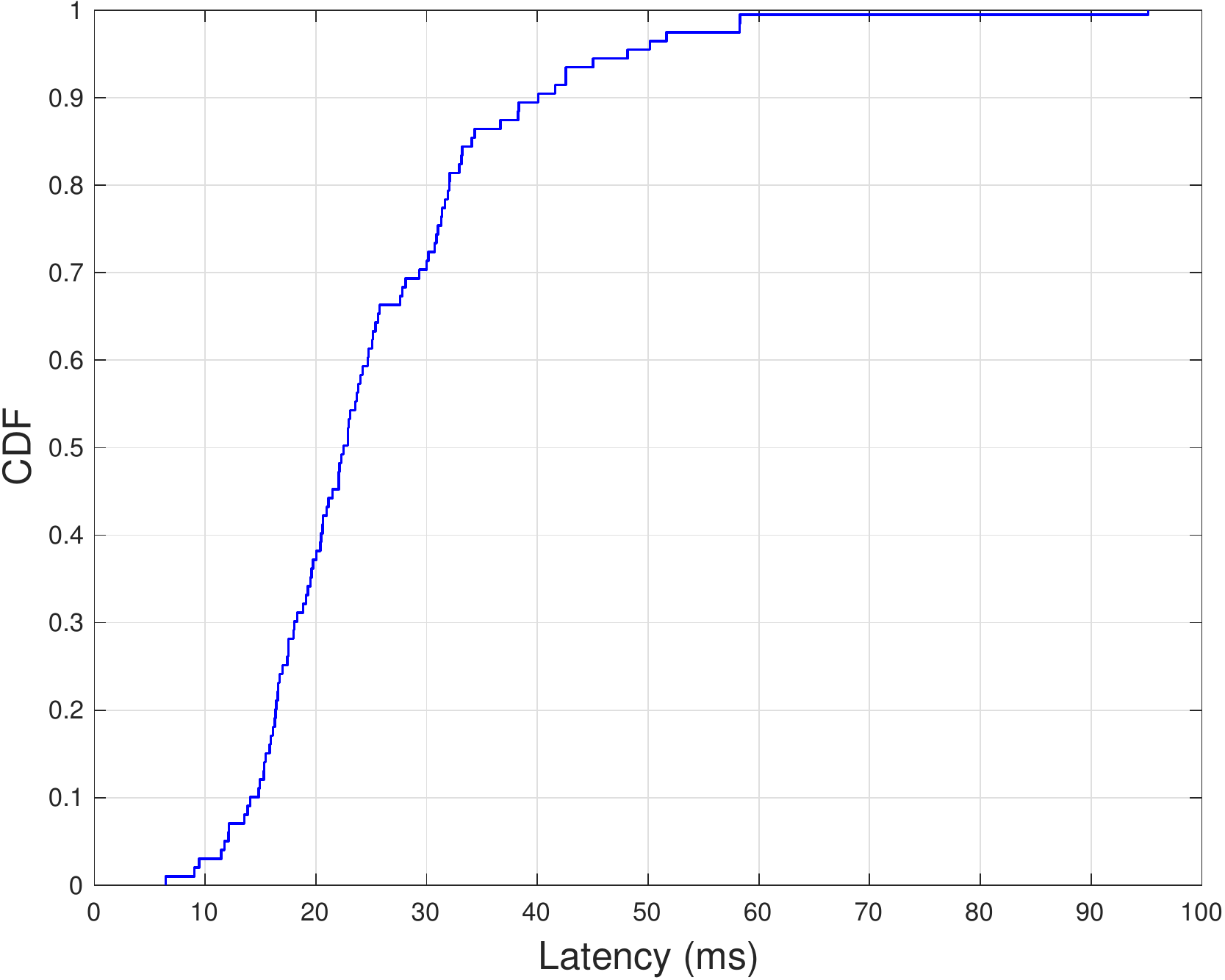}
  \caption{CvOp latency}\label{graph:CvOplatency}
  \end{subfigure}
\qquad
  \begin{subfigure}[b]{0.4\textwidth}
  \includegraphics[width=\textwidth]{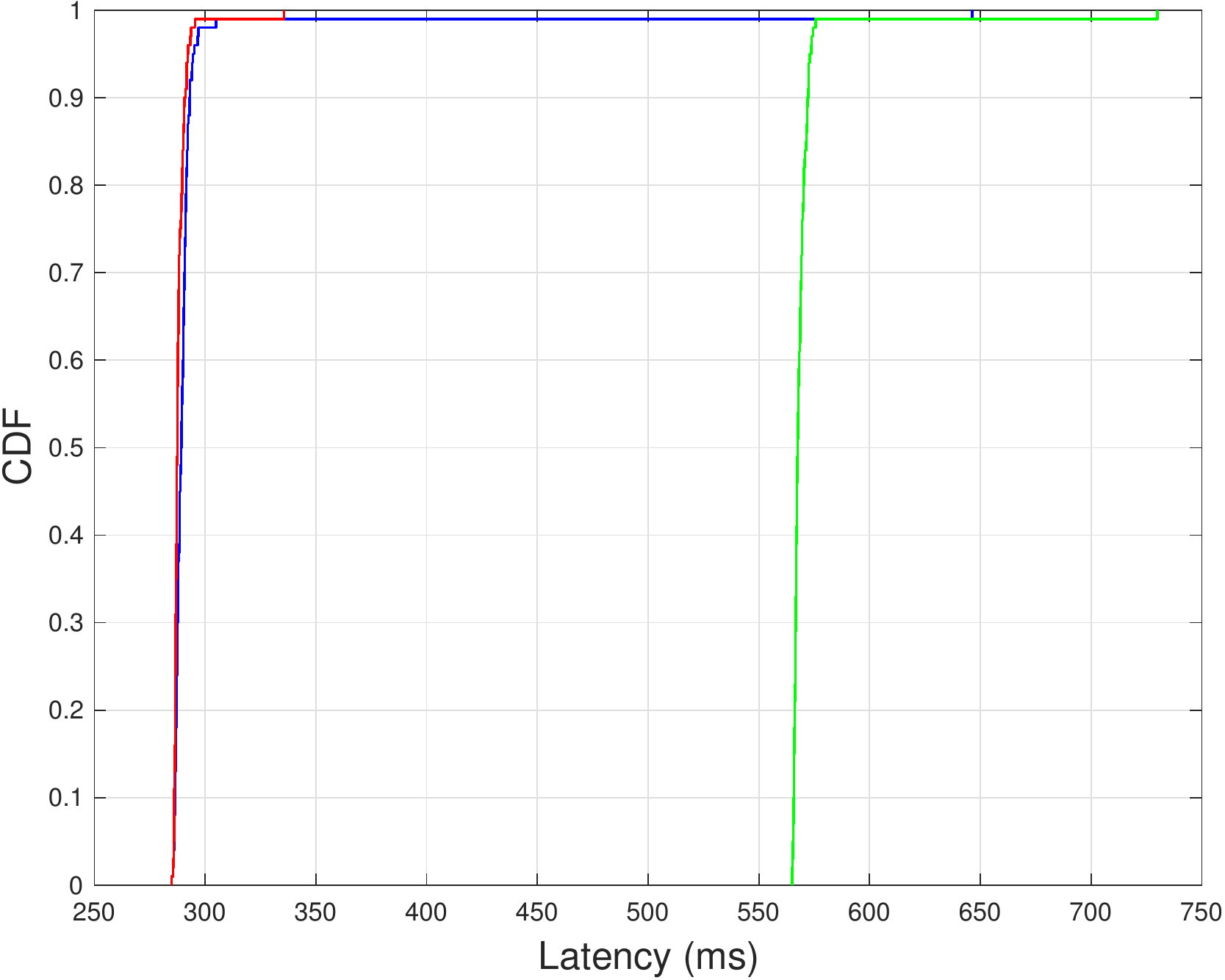}
\caption{TOp latency}\label{graph:TOplatency}
  \end{subfigure}
\caption{Latency CDF for CvOps and TOps when leader node locates on different regions. In (b), the green line cooresponds to the condition where leader locates in Sydney, the blue line cooresponds to Ohio and the red line cooresponds to London. }\label{graph:cdf}
\end{figure}

\paragraph{Latency.}
In OACP, any CvOp gets an immediate response once the request arrives to any of the servers in the cluster. In contrast, the TOp runs consensus protocol underneath to achieve consistency. In order to understand the effect of CvOps and TOps, we measure the latency for CvOps and TOps on Amazon EC2 in three different regions: Ohio, London and Sydney and we plot the CDFs of observed latencies in Figure~\ref{graph:cdf}. The cluster in our experiments is consists of three nodes which locate in different regions. A leader node needs to be elected to keep consensus. We put the client node in London and measure different conditions when the leader node locates in different regions. In general, CvOps get quick responses as we can see that the maximum latency is 60 ms, and 90\% of the response latency are within 40 ms. The latency time of TOps depends on the location of the leader node.  When the leader node locates in Sydney, the maximum latency is 600 ms. And when the leader node locates in the other two regions, the maximum latency is around 350ms.

\paragraph{Throughput.}
Now we focus on the throughput of OACP. We generate benchmarks with different proportion of CvOps and TOps. While we increase the number of concurrent requests to the same consistent log in the cluster, we measure the duration for processing all of the requests. The throughput is then the number of requests divided by the duration. The results in Figure~\ref{graph:throughput} show that increasing the ratio of TOps decreases the throughput. TOps require the leader node to force all the other replicas to reach consensus on the same log. Thus there will be a request queue on the server side. Any of the server nodes can process CvOps, and the commutative property of CvOps allows them to be processed in a random order. The throughput of the mix workloads is located between the pure workloads which give the programmer a range of choices.

\paragraph{Coordination.}
We consider this aspect because previous work has shown that reducing the coordination within the protocol can improve the throughput of user operations dramatically~\cite{BernsteinBBCFKK17}. In order to evaluate the performance independent of specific hardware and cluster configurations, our experiments count {\em the number of exchanged messages}. The message counting logic is added via automatic instrumentation of the executed JVM bytecode using AspectJ~\cite{KiczalesHHKPG01}.

\begin{figure}
    \centering
    \includegraphics[width=3.00in]{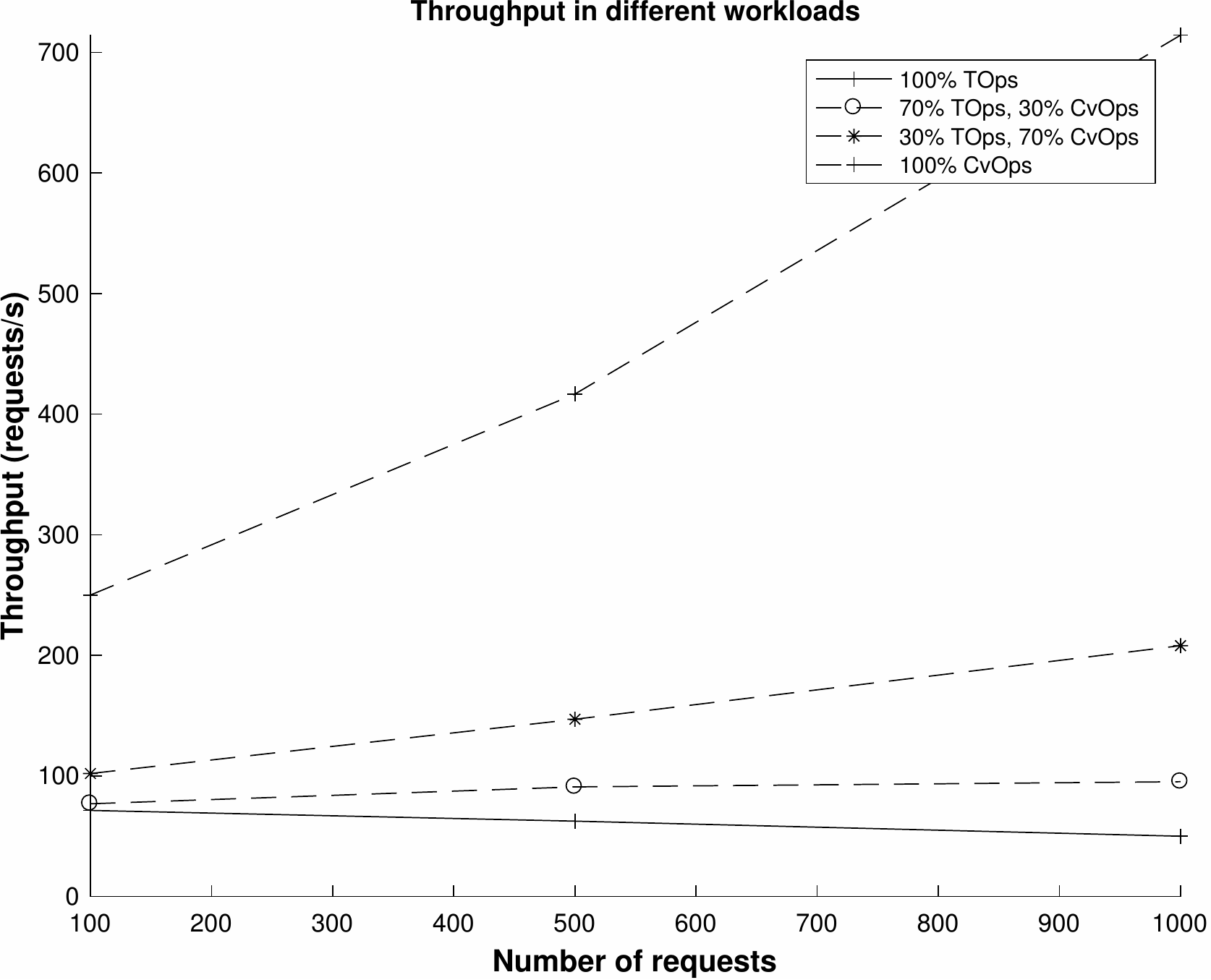}
    \caption{Throughput for a 3 site cluster with varying CvOp and TOp workload mixes.}\label{graph:throughput}
\end{figure}

\subsubsection{Microbenchmark.}
We start the evaluation with a simple shopping cart benchmark to see the advantages and weakness of the OACP protocol. We define the ``add'' and the ``remove'' operation as CvOps in OACP. The ``checkout'' operation in OACP is defined as a TOp (to ensure consistency upon checkout). Then we generate sequences of $n$ operations where $n \in (0, 1000]$; each operation can be either ``add'', ``remove'', or ``checkout''.

We use Akka's messaging interface to define an AspectJ pointcut (Figure~\ref{fig:aspectj}, lines 1--2). This allows us to detect every message sent between actors (Figure~\ref{fig:aspectj}, lines 4--5). In order to count only messages generated by each protocol, we filter out internal Akka messages (Figure~\ref{fig:aspectj}, line 9). Furthermore, we filter out heartbeat messages, since all protocols are assumed to exchange the same number of heartbeat messages, e.g., to implement RTOB.

\begin{figure}[h!]
\begin{lstlisting}[numbers=left, numberstyle=\scriptsize\color{gray}\ttfamily, backgroundcolor = \color{white}]
@Pointcut(value = "execution (* akka.actor.ActorCell.receiveMessage(..))")
    public void receiveMessagePointcut(Object msg) {}

@Before(value = "receiveMessagePointcut(msg)")
    messages.counting(msg);

class ActorSystemMessages {
  void counting(Object msg) {
    filter(akka internal msg && heartbeat msg)
}
\end{lstlisting}
\caption{AspectJ pointcut pseudocode.}\label{fig:aspectj}
\end{figure}

We compare the performance of OACP with the other two baseline protocols which are described as follows.

\begin{itemize}
\item Baseline protocol: every operation is submitted using RTOB to keep consistency on all the replicas.
\item Baseline protocol with batching: an optimized version which gives a fixed batching buffer and allows to submit multiple buffered operations at once.
\item OACP protocol: the protocol as described in Section~\ref{sec:oacp}.
\end{itemize}

We compared the number of exchanged messages among the baseline protocol, the batching protocol (batching buffer size: 5000 operations), and the OACP protocol, using a 3-node cluster. The results are shown in Figure~\ref{graph:GSPOACP}.

\begin{figure}
\centering
\begin{subfigure}[b]{0.46\textwidth}
  \includegraphics[width=\textwidth]{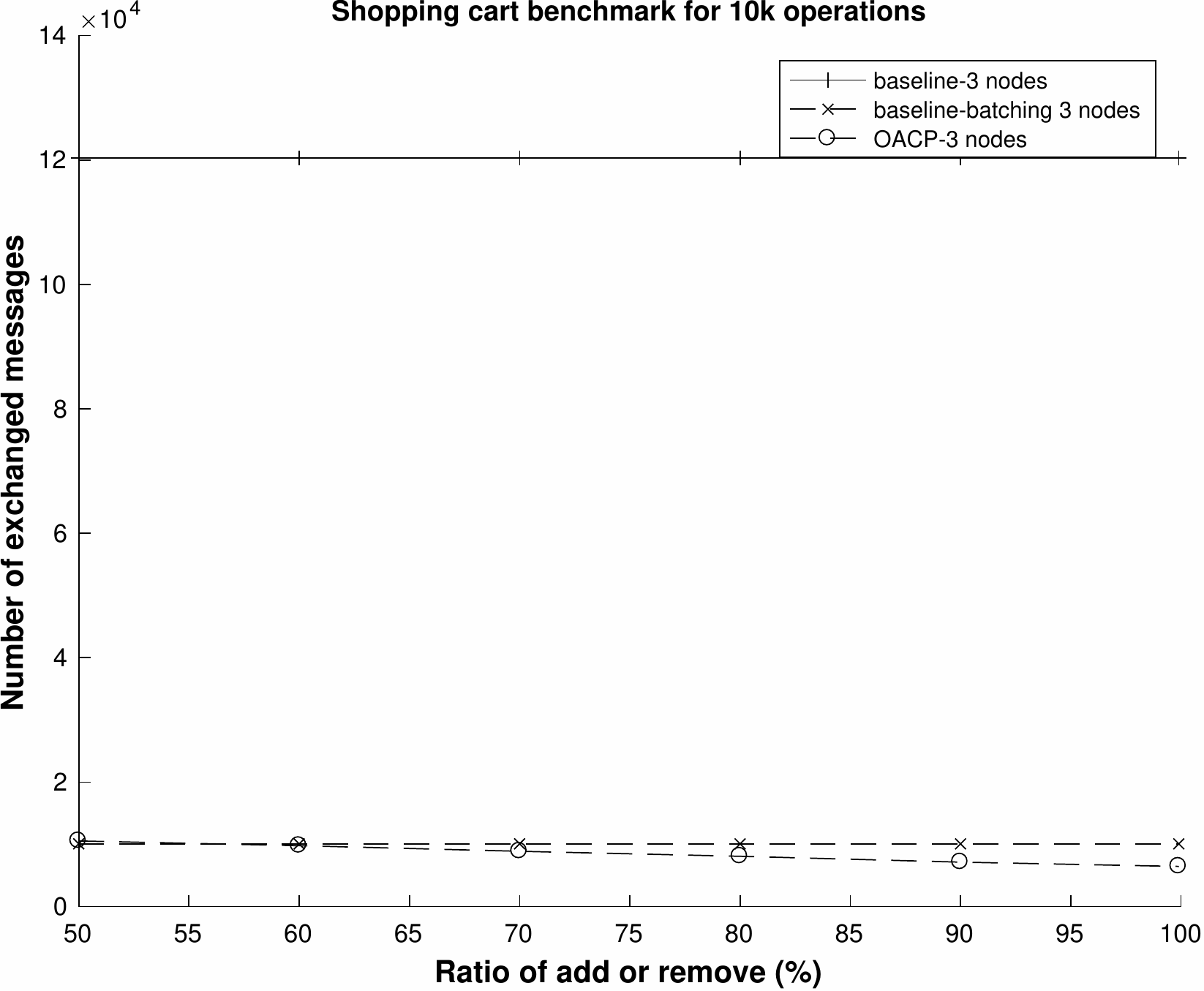}
  \caption{}\label{graph:GSPOACP}
  \end{subfigure}
\qquad
  \begin{subfigure}[b]{0.46\textwidth}
  \includegraphics[width=\textwidth]{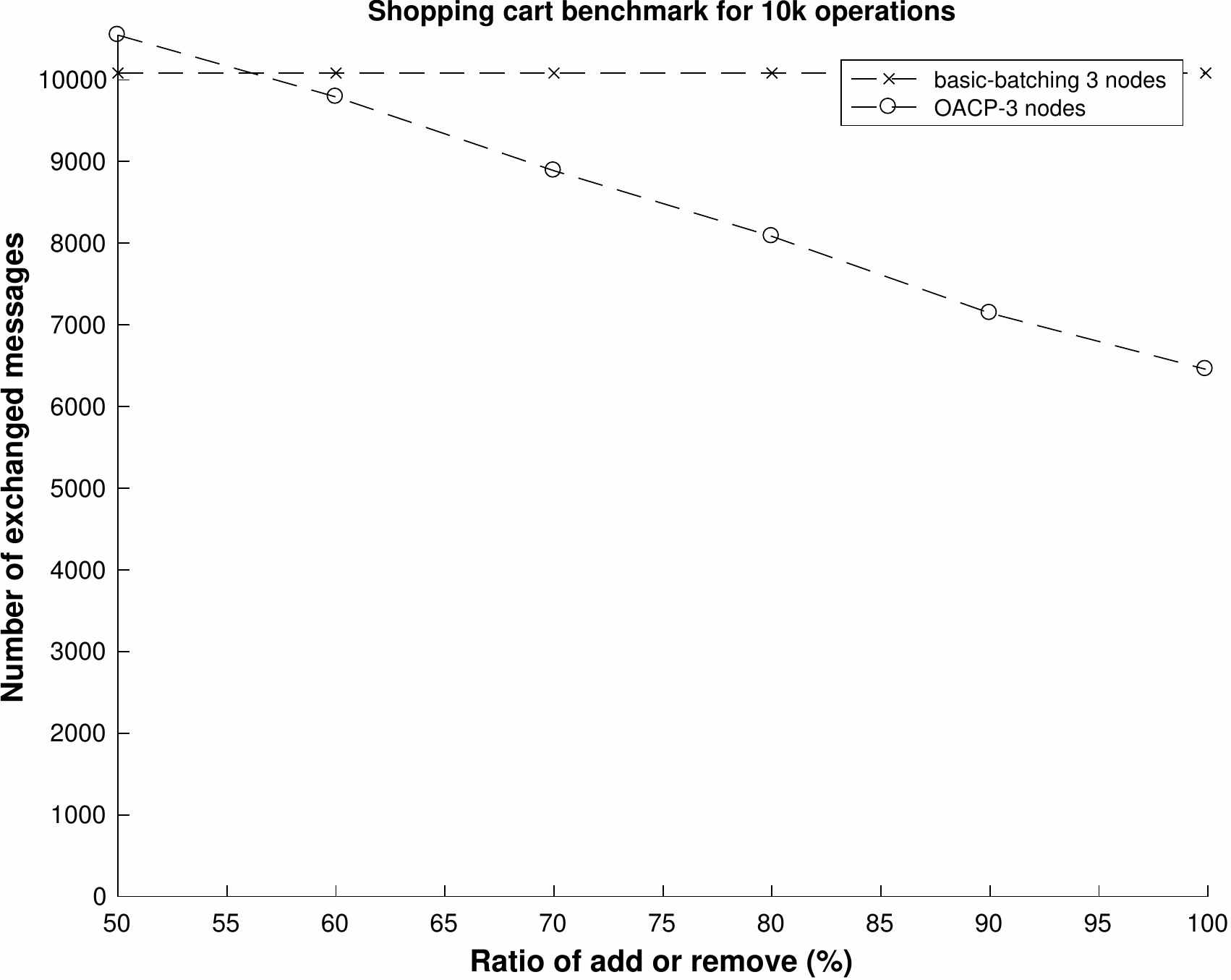}
\caption{}\label{graph:BatchingOACP}
  \end{subfigure}
\caption{Comparison on coordination among different protocols}
\end{figure}

The x-axis represents the ratio of ``add'' or ``remove'' operations in the whole sequence of operations (10k requests in this case), the y-axis represents the number of exchanged messages. From Figure~\ref{graph:GSPOACP}, we can see that baseline protocol requires a much higher number of exchanged messages, namely 12$\times$ the number of messages compared to both the batching protocol and the OACP protocol. In the more detailed Figure~\ref{graph:BatchingOACP}, we can see that when the ratio of CvOps increases, OACP benefits more. In OACP, when CvOp is 90\%, the number of messages can be reduced for 30\% compared with the batching protocol.

These results suggest the following {\em guidelines} for helping developers choose which protocol to use. When the percentage of CvOps is low (less than 50\% in the above microbenchmark), i.e., when the application needs TOps quite frequently, then batching for TOps is a better choice. When there are more CvOps happening between two TOps, then OACP performs better.

\paragraph{Scalability}

We also made a test from the scalability perspective. We increase the number of nodes in baseline protocol and OACP from 3 to 7, and at the meantime increase the number of operations,we get the Figure~\ref{graph:gspvsegsp}. When we abstract the gradian from Figure~\ref{fig:scalability}, we could get a scalability trend for these two systems.

\begin{figure}
    \centering
    \begin{subfigure}[b]{0.4\textwidth}
        \includegraphics[width=\textwidth]{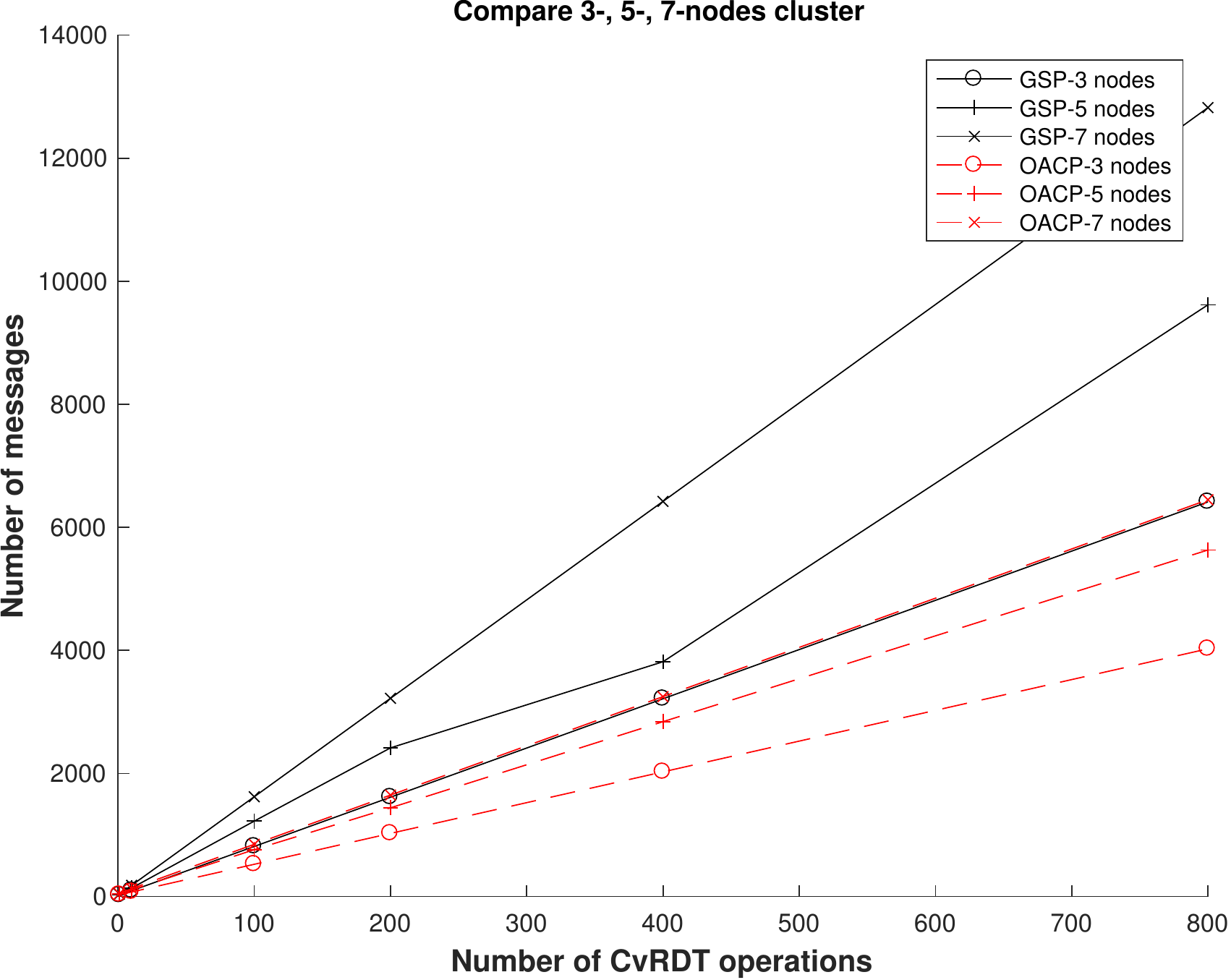}
        \caption{Compare 3-, 5-, 7-nodes cluster}
        \label{fig:compare}
    \end{subfigure}%
    \qquad 
    \begin{subfigure}[b]{0.4\textwidth}
        \includegraphics[width=\textwidth]{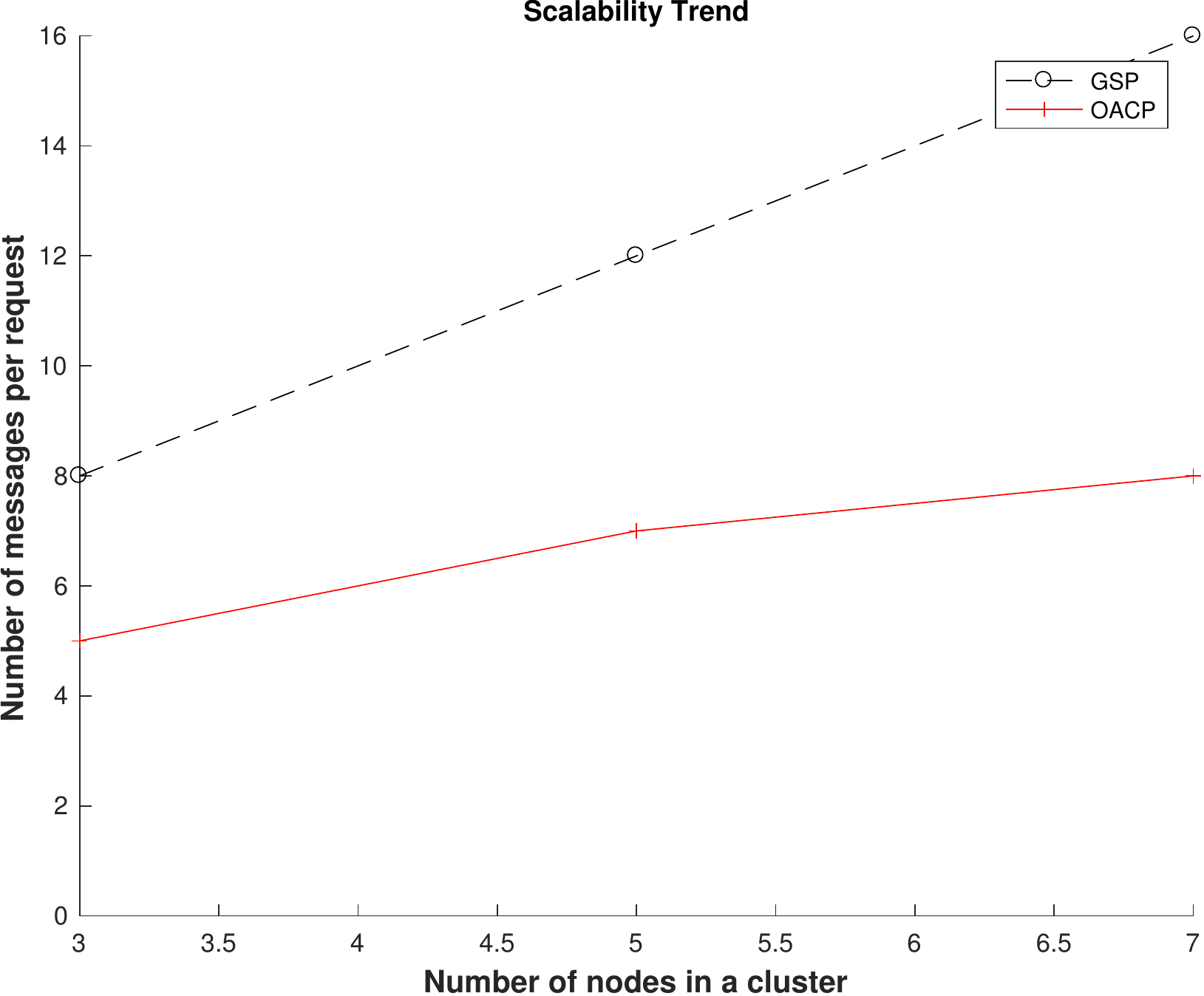}
        \caption{Trend of scalability}
        \label{fig:scalability}
    \end{subfigure}
    \caption{Baseline protocol vs. OACP on scalability}\label{graph:gspvsegsp}
\end{figure}

As we can see, the scalability of OACP is also better than GSP since it won’t be effected too much by the number of nodes in a cluster.

\subsubsection{Case study: Twitter-like application}
Following the shopping cart microbenchmark, we now extend our experiments to a more realistic application. This also allows us to investigate more aspects of our system since the application makes use of all features of OACP. We define a simple Twitter-like social networking application which supports \textsc{AddFollower}, \textsc{Tweet}, and \textsc{Read} operations. We define \textsc{Tweet} and \textsc{Read} as TOps and \textsc{AddFollower} as a CvOp in OACP. In the benchmarks, we focus on one specific user with a certain number of followers and send tweets continuously when the state of followers does not change.

\paragraph{Optimized observable atomic consistency protocol: O$^2$ACP.}\label{section:twittercompare}
For Twitter-like applications, the frequency of different events varies for different users. Some popular accounts tend to tweet more and to follow fewer users, while some newcomers follow more and tweet less.

Consider the following operation sequence:\newline
(1) \textsc{Tweet} $\rightarrow$ (2) \textsc{Tweet} $\rightarrow$ (3) \textsc{AddFollower} $\rightarrow$ (4) \textsc{Tweet}\newline
\noindent
Since we define \textsc{Tweet} as a TOp, the state of all replicas is consistent after each \textsc{Tweet} operation. Therefore, there is no need for a merge operation to be executed between (1) and (2) (to merge the convergent states of replicas). However, in the case (3) $\rightarrow$ (4), the state updated by (3) first needs to be merged into all replicas before executing (4). Thus, one possible optimization is to notify the system of the sequence of operations, so that when two TOps happen consecutively (e.g., (1) $\rightarrow$ (2)), the OACP system does not need to gather state information first. In this way, the system can decrease the number of exchanged messages. We call this optimization O$^2$ACP; the results of this comparison are shown in Figure~\ref{fig:e2gsp}.

\begin{figure}[t]
    \centering
    \begin{subfigure}[b]{0.4\textwidth}
        \includegraphics[width=\textwidth]{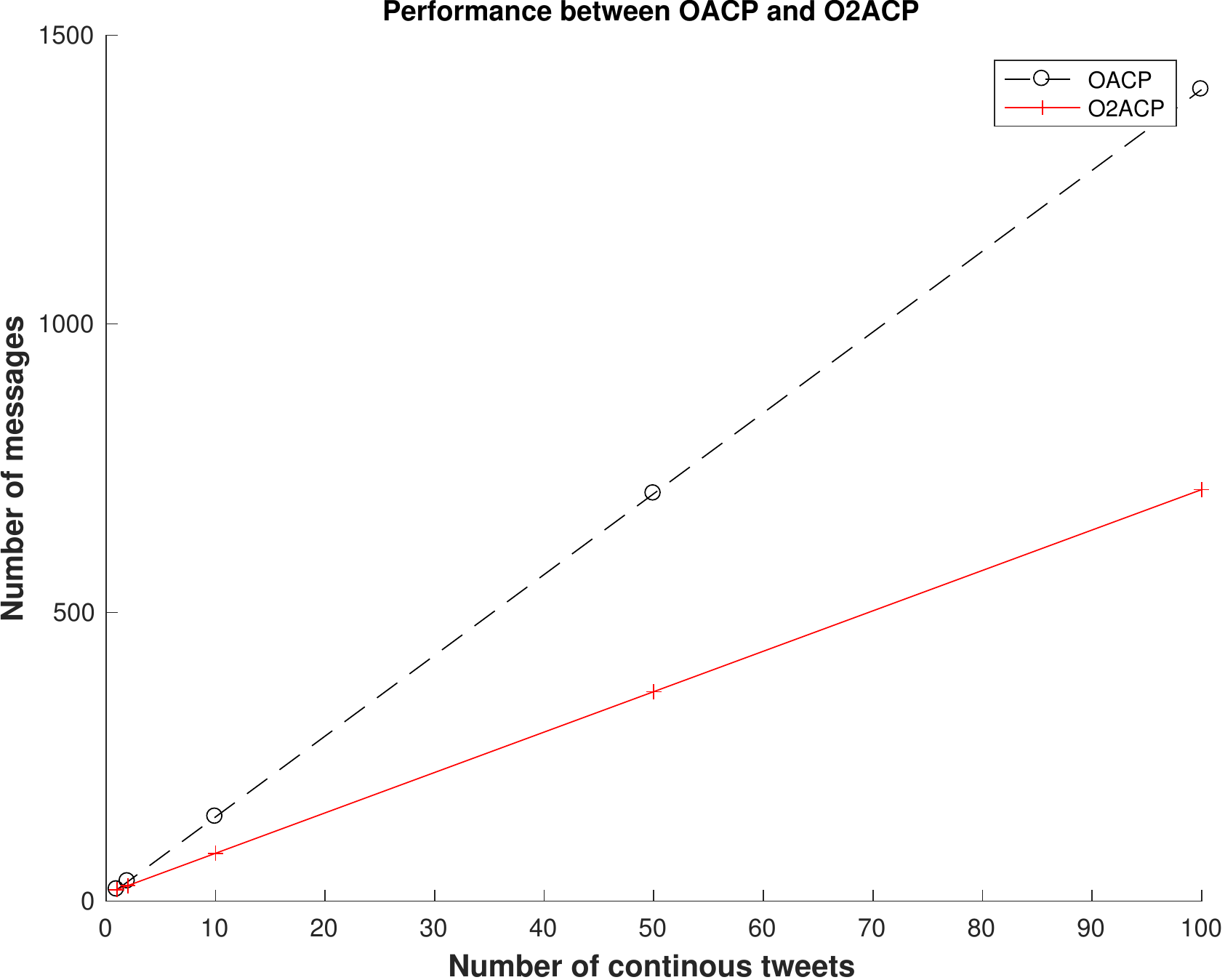}
        \caption{}
        \label{fig:e2gsp}
    \end{subfigure}%
    \qquad 
    \begin{subfigure}[b]{0.4\textwidth}
        \includegraphics[width=\textwidth]{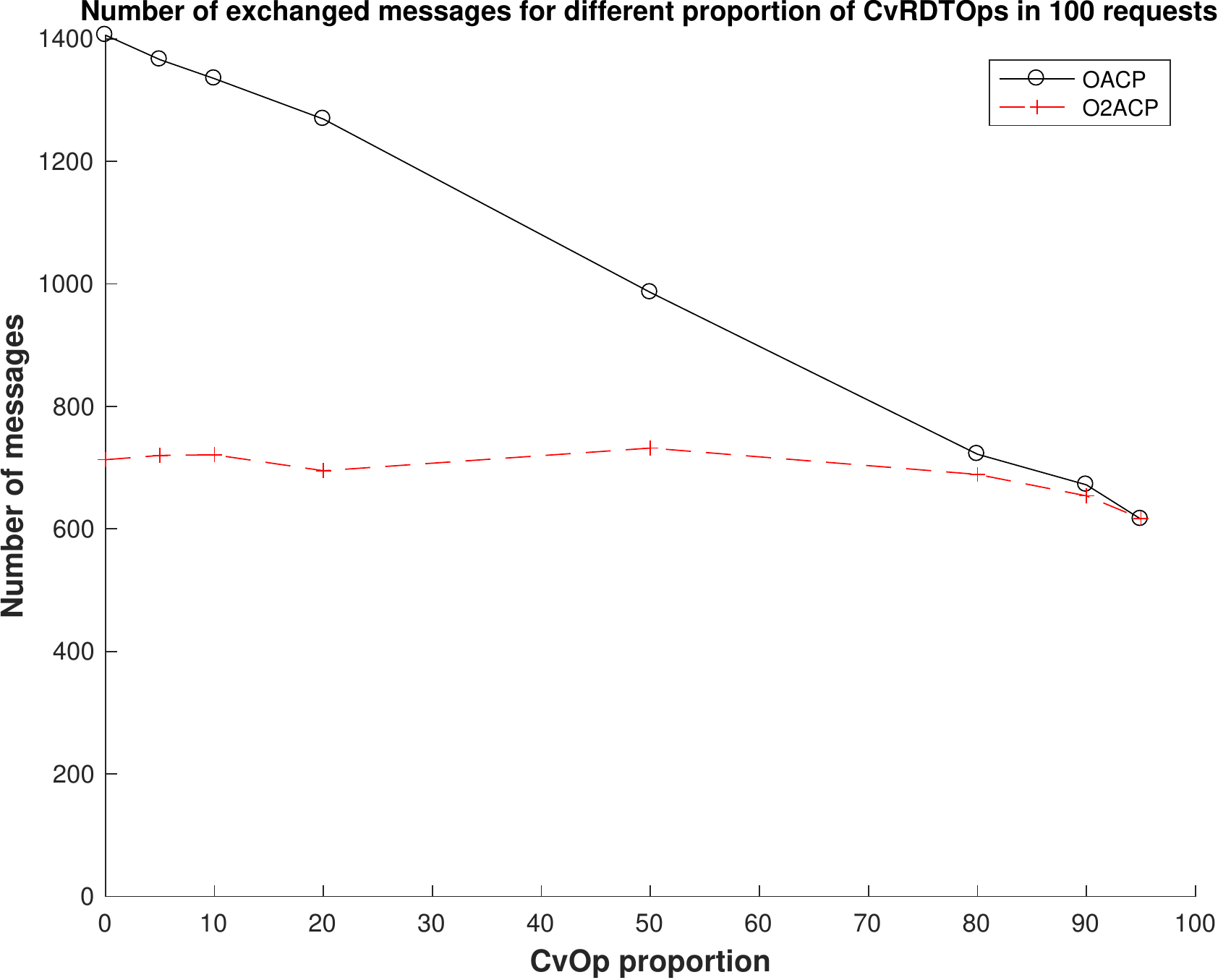}
        \caption{}
        \label{fig:proportion}
    \end{subfigure}
    \caption{OACP vs. O$^2$ACP}\label{graph:egspg2sp}
\end{figure}

According to Figure~\ref{fig:e2gsp}, when the number of continuous tweets grows, the increase of messages is significantly smaller in the case of O$^2$ACP than in the case of OACP. In the case of 100 tweets, O$^2$ACP only requires about 50\% of the messages of OACP, which is a significant improvement. We made another measurement for random sequences of TOps and CvOps, by varying the proportion of CvOps between 0\% and 95\%. The total number of client requests is 100. The results are shown in Figure~\ref{fig:proportion}. In each case, we take the average of 10 measurements for each proportion. The proportion of CvOps has a more substantial effect in the case of OACP than in the case of O$^2$ACP, since when CvOps increase from 0\% to 95\%, the exchanged messages in OACP reduce from around 1400 to 600, while in O$^2$ACP, the number of exchanged messages remains quite stable. This means that the performance of the optimized system is more stable in various situations.

\section{Related work}\label{sec:rel-work}

\paragraph*{Consistency levels.}
CAP~\cite{GilbertL02} theorem points out the impossibility for any distributed system to achieve consistency, availability and partition tolerance at the same time. Zookeeper~\cite{HuntKJR10} provides sequential consistency~\cite{goodman89} that updates from a client will be applied in the order that they were sent. Bayou~\cite{TerDB} is designed for supporting real-time collaborative applications and thus gives up consistency for high availability, providing eventual consistency~\cite{Burckhardt13}. Similarly, several other systems also provide multiple levels of consistency in order to give more flexibility according to application requirements and the network environment. \cite{Kraska09} allows a user to define the consistency guarantees as well as switch the guarantees at runtime automatically. Simba~\cite{Perkins15} enables mobile apps to select data consistency levels enforced by the system.
There are also many classifications in multi-level consistency. Fork consistency~\cite{Li04a,MazSha02} allows users to read from {\em forked sites} that may not be up-to-date. In contrast, write operations require all sites to be updated. Red-Blue consistency~\cite{LiPCGPR12} provides two types of operations: red operations and blue operations. SIEVE~\cite{LCPRV14} is a system based on RedBlue consistency which automatically chooses consistency levels according to a user's definition of system invariants. Explicit consistency in Indigo~\cite{Balegas15} guarantees the preservation of specific invariants to strengthen consistency beyond eventual consistency.

\paragraph*{Distributed application development frameworks.}
Correctables~\cite{Guerraoui16} is an abstraction to decouple applications from their underlying database, also provides incremental consistency guarantees to compose multiple consistency levels. QUELEA~\cite{Sivaramakrishnan15} gives a more well-reasoned specification for achieving the right consistency level. GSP~\cite{BurckhardtLPF15,MelgrattiR16} provides an operational reference model for replicated shared data. It abstracts the data model so that it can be applied to different kinds of data structures. In our companion technical report~\cite{ZhaoH18}, we compare the consistency model of GSP with OACP. Orleans~\cite{orleans} abstracts virtual actors to model distributed systems problems. The Akka framework~\cite{Akka,Haller12} provides a widely-used implementation of the actor model~\cite{agha86} on the JVM for writing highly concurrent, distributed, and resilient applications.

\paragraph*{Data management in distributed systems.}
The consensus of data in each replica is achieved by using consensus algorithms. Paxos~\cite{Lamport05} and Raft~\cite{OngaroO14} are popular protocols to manage consensus in replicated systems, used to implement RTOB~\cite{DefSchUrb04}. The ZooKeeper Atomic Broadcast protocol~\cite{JunqueiraRS11} (Zab) guarantees the replication order in ZooKeeper using Paxos. Our RTOB implementation is inspired by Zab but uses Raft because of its simplicity.
For resolving shared data conflicts in distributed systems, there are several approaches. CRDTs~\cite{Marc11} and cloud types~\cite{Burckhardt12,BernsteinB16} resolve conflicts automatically using convergent operations, but they impose important restrictions on data structures. Mergeable types~\cite{GKaki}, Cassandra~\cite{LakshmanM10}, CaCOPS~\cite{Lloyd11a}, Eiger~\cite{LloydFKA13}, and ChainReaction~\cite{AlmeidaLR13} use the last write wins strategy to ensure availability; however, they may lose data if concurrent writes happen frequently enough. Riak~\cite{BrownCME14} provides the ability to resolve write conflicts on the application level.

\section{Conclusion}\label{sec:conclusion}

We introduced the observable atomic consistency model which enables a
new extension of CvRDTs with non-monotonic operations. While lifting
a significant limitation of CvRDTs, we believe that it can significantly
simplify programming with CvRDTs. We presented the proof of state
convergence for systems providing observable atomic consistency. As a
next step, we discussed a new consistency protocol, called observable
atomic consistency protocol (OACP), which guarantees observable atomic
consistency for distributed systems. Experimental results show that
OACP is able to reduce the number of exchanged protocol messages
compared to the closely-related GSP protocol in several
microbenchmarks. This suggests that OACP can provide higher throughput
than GSP in some cases.



\bibliography{egsp}

\begin{thebibliography}{10}

\bibitem{agha86}
Gul~A. Agha.
\newblock {\em {ACTORS}: {A} Model of Concurrent Computation in Distributed
  Systems}.
\newblock Series in Artificial Intelligence. The MIT Press, 1986.

\bibitem{AlmeidaLR13}
S{\'e}rgio Almeida, Jo{\~a}o Leit{\~a}o, and Lu{\'i}s E.~T. Rodrigues.
\newblock {ChainReaction}: a causal+ consistent datastore based on chain
  replication.
\newblock In {\em EuroSys}, pages 85--98, 2013.

\bibitem{Balegas15}
Valter Balegas, S{\'e}rgio Duarte, Carla Ferreira, Rodrigo Rodrigues, Nuno~M.
  Pregui{\c c}a, Mahsa Najafzadeh, and Marc Shapiro.
\newblock Putting consistency back into eventual consistency.
\newblock In {\em EuroSys}, pages 6:1--6:16. ACM, 2015.

\bibitem{BaqueroAL16}
Carlos Baquero, Paulo~S{\'e}rgio Almeida, and Carl Lerche.
\newblock The problem with embedded crdt counters and a solution.
\newblock In {\em EuroSys}, pages 10:1--10:3. ACM, 2016.

\bibitem{RiakDTSource}
{Basho Technologies, Inc.}
\newblock Riak dt source code repository.
\newblock \url{https://github.com/basho/riak_dt}, 2012-2017.

\bibitem{orleans}
Phil Bernstein, Sergey Bykov, Alan Geller, Gabriel Kliot, and Jorgen Thelin.
\newblock Orleans: Distributed virtual actors for programmability and
  scalability.
\newblock Technical report, March 2014.

\bibitem{BernsteinBBCFKK17}
Philip~A. Bernstein, Sebastian Burckhardt, Sergey Bykov, Natacha Crooks,
  Jose~M. Faleiro, Gabriel Kliot, Alok Kumbhare, Muntasir~Raihan Rahman,
  Vivek~Shah 0001, Adriana Szekeres, and Jorgen Thelin.
\newblock Geo-distribution of actor-based services.
\newblock {\em PACMPL}, 1(OOPSLA):107:1--107:26, 2017.

\bibitem{BernsteinB16}
Philip~A. Bernstein and Sergey Bykov.
\newblock Developing cloud services using the orleans virtual actor model.
\newblock {\em IEEE Internet Computing}, pages 71--75, 2016.

\bibitem{BrownCME14}
Russell Brown, Sean Cribbs, Christopher Meiklejohn, and Sam Elliott.
\newblock {Riak DT} map: a composable, convergent replicated dictionary.
\newblock In {\em EuroSys}, page 1:1, 2014.

\bibitem{Burckhardt12}
Sebastian Burckhardt, Manuel F{\"a}hndrich, Daan Leijen, and Benjamin~P. Wood.
\newblock Cloud types for eventual consistency.
\newblock In {\em ECOOP}, pages 283--307. Springer, 2012.

\bibitem{Burckhardt13}
Sebastian Burckhardt, Alexey Gotsman, and Hongseok Yang.
\newblock Understanding eventual consistency.
\newblock Technical report, Microsoft Research (MSR), 2013.

\bibitem{BurckhardtLPF15}
Sebastian Burckhardt, Daan Leijen, Jonathan Protzenko, and Manuel
  F{\"a}hndrich.
\newblock Global sequence protocol: A robust abstraction for replicated shared
  state.
\newblock In {\em ECOOP}, pages 568--590. Schloss Dagstuhl - Leibniz-Zentrum
  fuer Informatik, 2015.

\bibitem{DefSchUrb04}
Defago, Schiper, and Urban.
\newblock Total order broadcast and multicast algorithms: Taxonomy and survey.
\newblock {\em CSURV: Computing Surveys}, 2004.

\bibitem{GilbertL02}
Seth Gilbert and Nancy~A. Lynch.
\newblock Brewer's conjecture and the feasibility of consistent, available,
  partition-tolerant web services.
\newblock {\em SIGACT News}, pages 51--59, 2002.

\bibitem{goodman89}
J.~R. Goodman.
\newblock Cache consistency and sequential consistency.
\newblock Technical report, IEEE Scalable Coherence Interface Working Group,
  March 1989.

\bibitem{Guerraoui16}
Rachid Guerraoui, Matej Pavlovic, and Dragos-Adrian Seredinschi.
\newblock Incremental consistency guarantees for replicated objects.
\newblock {\em CoRR}, 2016.

\bibitem{Haller12}
Philipp Haller.
\newblock On the integration of the actor model in mainstream technologies:
  {The} {Scala} perspective.
\newblock In {\em AGERE!@SPLASH}, pages 1--6. ACM, 2012.

\bibitem{HuntKJR10}
Patrick Hunt, Mahadev Konar, Flavio~Paiva Junqueira, and Benjamin Reed.
\newblock Zookeeper: Wait-free coordination for internet-scale systems.
\newblock In {\em USENIX Annual Technical Conference}, 2010.

\bibitem{JunqueiraRS11}
Flavio~Paiva Junqueira, Benjamin~C. Reed, and Marco Serafini.
\newblock Zab: High-performance broadcast for primary-backup systems.
\newblock pages 245--256. IEEE Compute Society, 2011.

\bibitem{GKaki}
Gowtham Kaki, KC~Sivaramakrishnan, Samodya Abeysiriwardane, and Suresh
  Jagannathan.
\newblock Mergeable types.
\newblock In {\em ML workshop}, 2017.

\bibitem{KiczalesHHKPG01}
Gregor Kiczales, Erik Hilsdale, Jim Hugunin, Mik Kersten, Jeffrey Palm, and
  William~G. Griswold.
\newblock An overview of aspectj.
\newblock In {\em ECOOP}, pages 327--353, 2001.

\bibitem{Kraska09}
Tim Kraska, Martin Hentschel, Gustavo Alonso, and Donald Kossmann.
\newblock Consistency rationing in the cloud: Pay only when it matters.
\newblock {\em PVLDB}, pages 253--264, 2009.

\bibitem{LakshmanM10}
Avinash Lakshman and Prashant Malik.
\newblock Cassandra: a decentralized structured storage system.
\newblock {\em Operating Systems Review}, pages 35--40, 2010.

\bibitem{Lamport05}
Leslie Lamport.
\newblock Fast paxos.
\newblock Technical report, Microsoft Research (MSR), July 2005.
\newblock URL: \url{ftp://ftp.research.microsoft.com/pub/tr/TR-2005-112.pdf}.

\bibitem{LCPRV14}
Cheng Li, Jo{\~a}o Leit{\~a}o, Allen Clement, Nuno~M. Pregui{\c c}a, Rodrigo
  Rodrigues, and Viktor Vafeiadis.
\newblock Automating the choice of consistency levels in replicated systems.
\newblock In {\em USENIX Annual Technical Conference}, pages 281--292, 2014.

\bibitem{LiPCGPR12}
Cheng Li, Daniel Porto, Allen Clement, Johannes Gehrke, Nuno~M. Pregui{\c c}a,
  and Rodrigo Rodrigues.
\newblock Making geo-replicated systems fast as possible, consistent when
  necessary.
\newblock In {\em OSDI}, pages 265--278. USENIX Association, 2012.

\bibitem{Li04a}
Jinyuan Li, Maxwell~N. Krohn, David Mazieres, and Dennis Shasha.
\newblock Secure untrusted data repository ({SUNDR}).
\newblock In {\em OSDI}, pages 121--136, December 2004.

\bibitem{Akka}
{Lightbend, Inc.}
\newblock {Akka}.
\newblock \url{http://akka.io/}, 2009.
\newblock Accessed: 2016-03-20.

\bibitem{Lloyd11a}
Wyatt Lloyd, Michael~J. Freedman, Michael Kaminsky, and David~G. Andersen.
\newblock Don't settle for eventual: Scalable causal consistency for wide-area
  storage with {COPS}.
\newblock In {\em SOSP}, pages 401--416. ACM Press, October 2011.

\bibitem{LloydFKA13}
Wyatt Lloyd, Michael~J. Freedman, Michael Kaminsky, and David~G. Andersen.
\newblock Stronger semantics for low-latency geo-replicated storage.
\newblock In {\em NSDI}, 2013.

\bibitem{MazSha02}
Mazieres and Shasha.
\newblock Building secure file systems out of byzantine storage.
\newblock In {\em PODC}, 2002.

\bibitem{MelgrattiR16}
Hern{\'a}n~C. Melgratti and Christian Rold{\'a}n.
\newblock A formal analysis of the global sequence protocol.
\newblock In {\em COORDINATION}, pages 175--191. Springer, 2016.

\bibitem{OngaroO14}
Diego Ongaro and John~K. Ousterhout.
\newblock In search of an understandable consensus algorithm.
\newblock In {\em USENIX ATC}, pages 305--319. USENIX Association, 2014.

\bibitem{Perkins15}
Dorian Perkins, Nitin Agrawal, Akshat Aranya, Curtis Yu, Younghwan Go,
  Harsha~V. Madhyastha, and Cristian Ungureanu.
\newblock Simba: tunable end-to-end data consistency for mobile apps.
\newblock In {\em EuroSys}, pages 7:1--7:16. ACM, 2015.

\bibitem{Marc11}
Marc Shapiro, Nuno Pregui{\c c}a, Carlos Baquero, and Marek Zawirski.
\newblock A comprehensive study of convergent and commutative replicated data
  types.
\newblock Technical Report RR-7506; inria-00555588, HAL CCSD, January 2011.

\bibitem{MarcS11}
Marc Shapiro, Nuno~M. Pregui{\c{c}}a, Carlos Baquero, and Marek Zawirski.
\newblock Conflict-free replicated data types.
\newblock In {\em {SSS}}, pages 386--400, 2011.

\bibitem{Sivaramakrishnan15}
K.~C. Sivaramakrishnan, Gowtham Kaki, and Suresh Jagannathan.
\newblock Declarative programming over eventually consistent data stores.
\newblock In {\em PLDI}, pages 413--424, 2015.

\bibitem{TerDB}
D.~B. Terry, M.~M. Theimer, K.~Petersen, A.~J. Demers, M.~J. Spreitzer, and
  C.~Hauser.
\newblock Managing update conflicts in {Bayou}, a weakly connected replicated
  storage system.
\newblock In {\em SOSP}, pages 172--183, December 1995.

\bibitem{Twissandra}
{Twissandra}.
\newblock {Twitter clone on Cassandra}.
\newblock \url{http://twissandra.com/}, 2014.
\newblock Accessed: 2018-02-26.

\bibitem{ZhaoH18}
Xin Zhao and Philipp Haller.
\newblock Observable atomic consistency for {CvRDTs}.
\newblock {\em CoRR}, abs/1802.09462, 2018.
\newblock URL: \url{https://arxiv.org/abs/1802.09462}, \href
  {http://arxiv.org/abs/1802.09462} {\path{arXiv:1802.09462}}.

\end{thebibliography}

\end{document}